%
\documentclass[runningheads]{llncs}
\usepackage{bbding}
\usepackage{etoolbox}
\pdfoutput=1
\usepackage[T1]{fontenc}
\def\doi#1{\href{https://doi.org/\detokenize{#1}}{\url{https://doi.org/\detokenize{#1}}}}
\usepackage{graphicx}
%
%
%
\advance\textwidth4mm 
\advance\hoffset-2mm
\advance\textheight4mm
\advance\voffset-2mm

\usepackage{amsmath,amssymb,mathtools}
\usepackage{algorithm}
\usepackage{algorithmicx}
\usepackage[noend]{algpseudocode}
\usepackage{booktabs}
\usepackage{tikz}
\usepackage{nicefrac}
\usepackage{xparse}
\usepackage{xfrac}
\usepackage{multirow}
\usepackage{tikz}\usetikzlibrary{automata, positioning, arrows}
\tikzset {
	->,
	>=stealth',
	node distance=3cm,
	every state/.style={thick, fill=gray!10},
	initial text=$ $,
}

\renewcommand{\circ}{\bigcirc}

\newtoggle{arxiv}

\togglefalse{arxiv}

\newcommand{\Naturals}{\mathbb{N}}
\newcommand{\Reals}{\mathbb{R}}

\newcommand{\abs}[1]{\lvert #1 \rvert}
\DeclareMathOperator*{\argmax}{arg\, max}
\DeclareMathOperator*{\argmin}{arg\, min}

\newcommand{\eqdef}{\vcentcolon=}

\newcommand{\Dist}{\mathsf{Dist}}

\newcommand{\post}{\mathsf{Post}}
\newcommand{\diff}{\mathsf{diff}}
\newcommand{\lfp}{\mathsf{lfp}}


\newcommand{\qee}{\hfill$\triangle$}


\renewcommand{\Box}{\square}
\renewcommand{\circ}{\bigcirc}
\newcommand{\GG}{\mathcal{G}}
\newcommand{\SG}{\GG}
\newcommand{\Gsigma}{\GG^{(\sigma,\cdot)}}
\newcommand{\Gtau}{\GG^{(\cdot,\tau)}}
\newcommand{\Gst}{\GG^{(\sigma,\tau)}}
\newcommand{\randomRandomOutcome}{\GG_\mathsf{Algo}}
\newcommand{\connectedSG}{\mathsf{\GG_{reach}}}

\newcommand{\states}{\mathsf{S}}
\newcommand{\minStates}{\states_{\circ}}
\newcommand{\maxStates}{\states_{\Box}}
\newcommand{\initstate}{s_{0}}

\newcommand{\state}{s}
\newcommand{\action}{a}
\newcommand{\act}{\mathsf{A}}
\newcommand{\actions}{\act}
\newcommand{\Av}{\mathsf{Av}}
\newcommand{\trans}{\delta}
\newcommand{\stateMac}{s}

\newcommand{\sink}{\mathsf{z}}
\newcommand{\sinks}{\mathsf{Z}}
\newcommand{\target}{\mathsf{f}}
\newcommand{\targets}{\mathsf{F}}

\newcommand{\Bop}{\mathcal{B}}
\newcommand{\BopD}{\mathcal{B}^{\mathsf{D}}}
\newcommand{\unknown}{\states^?}


\newcommand{\Paths}{\mathsf{Paths}}

\newcommand{\probability}{\mathcal{P}}
\newcommand{\val}{\mathsf{V}}
\newcommand{\lb}{\mathsf{L}}
\newcommand{\ub}{\mathsf{U}}


\newcommand{\SI}{\textsc{SI}}
\newcommand{\LPSI}{\SI_\textsc{LP}}
\newcommand{\SISI}{\SI_\textsc{SI}}
\newcommand{\TLPSI}{\textsc{T}\LPSI}
\newcommand{\TSISI}{\textsc{T}\SISI}

\newcommand{\VI}{\textsc{VI}}
\newcommand{\OVI}{\textsc{OVI}}

\newcommand{\BVI}{\textsc{BVI}}

\newcommand{\WP}{\textsc{WP}}
\newcommand{\PTVI}{\textsc{PTVI}}
\newcommand{\PTBVI}{\textsc{PTBVI}}

\newcommand{\realcs}{REAL}

\newcommand{\many}{RANDOM}

\newcommand{\bestExit}{\mathsf{bexit}}
\newcommand{\drawcirc}{\node[draw,circle,minimum size=.7cm, outer sep=1pt]}
\newcommand{\drawbox}{\node[draw,rectangle,minimum size=.7cm, outer sep=1pt]}
\newcommand{\drawdummy}{\node[minimum size=0,inner sep=0]}

%

\usepackage{algorithm}
\usepackage{nccmath}
\usepackage[noend]{algpseudocode}
\usepackage{comment}
\usepackage{amssymb}
\usepackage{tabularx}
\usepackage{pgffor}
\usepackage{tikz}
\usepackage{tikz}\usetikzlibrary{automata, positioning, arrows, graphs, quotes, calc,angles, shapes.multipart, trees, patterns,decorations.pathreplacing,decorations.markings}
\usepackage{placeins}
\usepackage{subfig}
\tikzset{node distance=2.5cm, 
   every state/.style={minimum size=0pt, fill=gray!10,rectangle, 
   align=center}, 
   initial text={}.
   every picture/.style={>=stealth'},
   brace/.style={decorate,decoration=brace}, semithick}

\newcommand{\para}[1]{\smallskip\noindent{\bf#1}}
%
\usepackage{listings}
\lstset{language=Pascal}


\let\llncssubparagraph\subparagraph
\let\subparagraph\paragraph
\usepackage[compact]{titlesec}
\usepackage{titlesec}
\let\subparagraph\llncssubparagraph

\titlespacing*{\section}{0pt}{2.5ex plus 1ex minus 1ex}{2ex plus 0.5ex minus 0.5ex}
\titlespacing*{\subsection}{0pt}{2.25ex plus 0.5ex minus 1ex}{1.5ex plus 0.5ex minus 0.5ex}
\titlespacing*{\subsubsection}{0pt}{1ex plus 0.5ex minus 0ex}{1ex}


\setlength{\textfloatsep}{0.2\textfloatsep} 
\setlength{\intextsep}{0.2\intextsep} 
\setlength{\floatsep}{0.2\floatsep} 
\setlength{\abovecaptionskip}{0.2\abovecaptionskip}
\setlength{\belowcaptionskip}{0.2\belowcaptionskip}

\setcounter{topnumber}{2}
\setcounter{totalnumber}{3}



\makeatletter
\RequirePackage[bookmarks,unicode,colorlinks=true]{hyperref}%
   \def\@citecolor{blue}%
   \def\@urlcolor{blue}%
   \def\@linkcolor{blue}%

\def\orcidID#1{\smash{\href{http://orcid.org/#1}{\protect\raisebox{-1.25pt}{\protect\includegraphics{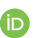}}}}}
\makeatother
\begin{document}
\title{Optimistic and Topological Value Iteration for Simple Stochastic Games
}
\titlerunning{Optimistic and Topological VI for SSGs}

\author{Muqsit Azeem\orcidID{0000-0003-4532-8344} \and
Alexandros Evangelidis$^{\text{\href{mailto:alexandros.evangelidis@tum.de}{\Envelope}}}$\,\orcidID{0000-0003-4032-3042} \and
Jan K{\v{r}}et\'insk\'y\orcidID{0000-0002-8122-2881}
\and Alexander Slivinskiy\orcidID{0000-0002-0856-106X}\and Maximilian Weininger\orcidID{0000-0002-0163-2152}}
\authorrunning{Azeem, Evangelidis, K{\v{r}}et\'insk\'y, Slivinskiy, Weininger}
%
%
\institute{Technical University of Munich, Munich, Germany\\
\email{firstname.lastname@tum.de}}
\maketitle              

\vspace{-1.75em}
\begin{abstract}
   While value iteration (VI) is a standard solution approach to simple stochastic games (SSGs), it suffered from the lack of a stopping criterion. Recently, several solutions have appeared, among them also ``optimistic'' VI (OVI).
	However, OVI is applicable only to one-player SSGs with no end components. We lift these two assumptions, making it available to \emph{general SSGs}.
	Further, we utilize the idea in the context of topological VI, where we provide an efficient \emph{precise} solution. In order to compare the new algorithms with the state of the art, we use not only the standard benchmarks, but we also design a \emph{random generator} of SSGs,
   which can be biased towards various types of models, aiding in understanding the advantages of different algorithms on SSGs.

\end{abstract}

\vspace{-1.75em}

\section{Introduction}
\label{sec:intro}

\vspace{-0.5em}
\para{Stochastic games} (SGs) are a standard model for decision making in the presence of adversary and uncertainty, by combining two (opposing) non-determinisms with stochastic dynamics.
Thus, they extend both Markov decision processes (MDPs), the standard model for sequential decision making and probabilistic verification, and 2-player graph games, the standard model for reactive synthesis.
\emph{Simple stochastic games} (SSGs) \cite{condonComplexity} form an important special case where the goal is to reach a given state.
In technical terms, an SSG is a zero-sum two-player turn-based game played on a graph by Maximizer and Minimizer, who choose actions in their respective vertices (also called states).
Each action is associated with a probability distribution determining the next state to move to.
The objective of Maximizer is to maximize the probability of reaching a given target state; the objective of Minimizer is the opposite.
The interest in SSGs stems from two sources.
Firstly, solving an SSG is polynomial-time equivalent to solving perfect information Shapley, Everett and Gillette games \cite{DBLP:conf/isaac/AnderssonM09} and further important problems can be reduced to SSGs, for instance parity games, mean-payoff games, discounted-payoff games and their stochastic extensions~\cite{DBLP:journals/corr/abs-1106-1232}; yet, the complexity of solving SSGs remains a long-standing open question, known to be in $\mathbf{UP}\cap\mathbf{coUP}$~\cite{HK66}, but with polynomial-time algorithm staying elusive.
Secondly, the problem is practically relevant in verification and synthesis in stochastic environments, with many applications, e.g., \cite{DBLP:journals/algorithmica/LaValle00,DBLP:journals/fmsd/ChenFKPS13,DBLP:conf/qest/ChenKSW13,DBLP:conf/icse/CamaraMG14}, surveyed in detail in \cite{DBLP:journals/ejcon/SvorenovaK16}.
Consequently, heuristics improving performance of the algorithms for solving SSGs are also practically relevant.

\para{Algorithms} used to (approximately) solve SSGs can be divided into several classes, most notably quadratic programming (QP) and dynamic programming, the latter comprising strategy iteration (SI) and value iteration (VI).
For their practical comparison, see the recent \cite{gandalf20}.

On the one hand, only when exact solutions are required, SI is mostly used.
It provides a sequence of improving strategies and, accompanied by evaluation of Markov chains via systems of linear equations, can yield the precise result.
On the other hand, approximate solutions (with a certain imprecision) are  faster to compute and often sufficient.
For this reason, VI is the technique used in practice the most, e.g., in PRISM-games \cite{prismgames3}, although not necessarily always the best.
It gradually approximates (from below) the optimal probability to reach the target from each state.
Interestingly, until very recently no means were known to determine the current precision, and so standard implementations terminating whenever no significant improvements occur can be arbitrarily wrong \cite{bvi}.
More surprisingly, this was even the case for MDPs, i.e.,~SSGs with a single player.

In 2014 \cite{atva,bvi}, the first stopping criterion for MDPs was given, quantifying precision of the current approximation by providing also a sequence converging to the optimal probabilities from above.
The difficulty to obtain such converging upper bound arises from cyclic dependencies of the optimal probabilities in so-called end components (ECs).
For instance, an action surely self-looping on a state trivializes the equations, stating only that the probability in this state is simply equal to itself, yielding an infinity of solutions, not just the optimal one.
This issue has been solved for MDPs \cite{atva,bvi} by ``collapsing'' these ECs into single states with no loops, which corresponds to identifying cyclically dependent variables into a single one.

In 2018 \cite{KKKW18}, the idea was finally extended to SSGs, giving rise to \emph{bounded value iteration} (BVI) with the first stopping criterion for SSGs.
Note that the MDP solution could not be directly used since the analog of ECs in SSGs is more complex:
different states in an EC in an SSG can have different optimal probabilities and thus cannot be merged.
Instead, \emph{``deflating''} manipulates the values in smaller and dynamically changing ``simple ECs''.

Since the first VI stopping criterion was given for MDPs, several alternatives have been proposed, most notably \emph{sound VI} (SVI) \cite{DBLP:conf/cav/QuatmannK18} and \emph{optimistic VI} (OVI) \cite{ovi}.
However, the termination proofs of both require the MDP to contain no ECs.
They achieve this by collapsing ECs, which is not applicable to SSGs.

\para{Our contribution.} 
In this paper, we extend the idea of OVI in two ways so that we obtain algorithms for SSGs.

\paragraph{First algorithm} The idea of OVI~\cite{ovi} is to run VI (converging from below) until changes are small, then to guess slightly larger values and check whether they form an upper bound. 
If not, the process continues.
To overcome the requirement that there is no EC, we complement the procedure of \cite{ovi} with the deflating of \cite{KKKW18}.
However, to ensure monotonicity of the Bellman operator, the so-called ``simple'' ECs must be computed differently from~\cite{KKKW18}. 
While the rest of the proof is analogous to \cite{ovi}, we try to make it simpler and more elegant by separating the core idea from the practical improvements.
As a result, we obtain an OVI algorithm for SSGs.

\paragraph{Second algorithm} We consider the classic \emph{``topological''} optimization of VI~\cite{TVI1}, where the system is analysed per strongly connected component (SCC) in the bottom-up order.
While such decomposition often leads to savings in runtime and memory, also when expected accumulated rewards are considered \cite{ensure}, the imprecisions from lower SCCs propagate to the upper ones, yielding the method useless whenever the system is too deep (with as few SCCs in a row as 20) even for Markov chains, see Example~\ref{ex:tvi}.
We fix this issue by precise \emph{and} fast computations in each SCC as follows.
First, we quickly obtain an approximate solution by VI, then we optimistically guess the solution, but in contrast to OVI which guesses values, we guess optimal strategies, which turns out to require orders of magnitude fewer guesses.
If the guess is not correct, a step of SI can be cheaply performed.
This version of OVI can thus be also seen as a possible warm start~for~SI.



\paragraph{Comparison and model generation}
We compare the resulting approaches to BVI and a more recent SSG solution called ``widest path'' \cite{widest} (WP).
While there is no clear winner, we provide insights as to which algorithm to use in different settings.
As noticed already in \cite{gandalf20}, the performance of SSG algorithms is extremely sensitive to the structure of the models.
Unfortunately, there are too few realistic case studies and thus a very limited number of model structures.
Consequently, in order to be able to experimentally compare our algorithms in a reasonable way, we propose an approach for random SSG generation.
While we prove that our approach can generate every SSG, it skews towards certain types of models.
Hence we provide means for the user to skew towards model structures that they are interested in, e.g.,\ increasing or decreasing the number of SCCs.
This helps to find out which algorithms are sensitive to which model parameters, e.g.,\ amount of SCCs.
While this is only the first step towards filling this gap of random SSG (and MDP) generation, we hope to encourage more research on the topic through this~effort.


\para{Our contribution} can be summarized as follows:
\begin{itemize}
	\item We design an extension of OVI to SSGs. As a side effect, we extend OVI on MDPs, lifting the requirement of no ECs (Section~\ref{sec:ovi}).
	\item We extend the landscape by providing an efficient VI-based approach for precise solutions, using the OVI idea on strategies, rather than values (Section~\ref{sec:pt}).
	\item We provide and evaluate a random generator of SSGs, which can be biased towards various types of models (Section~\ref{sec:randomGen}).
	\item We compare the resulting methods to the state of the art (BVI, WP, SI) experimentally (Section~\ref{sec:exp}).
\end{itemize}

\para{Related work.}
Closest to our work, in the case of SSGs, is the work of \cite{KKKW18} where the first stopping criterion for VI was given. 
It extends both normal BVI~\cite{bvi} and its learning-based counterpart \cite{atva} from MDPs by incorporating the so-called deflating procedure as part of their computation.
Recently, another BVI variant for SSGs was proposed which introduces a global propagation of upper bounds~\cite{widest}.
Also, the simpler case of an SSG with one-player ECs is discussed in \cite{MatPhD}.

In general, the tools which are available for solving SGs are limited. PRISM-games \cite{prismgames3} implements the
standard VI algorithms, and it also considers other objectives apart from reachability, such as mean-payoff and ratio reward.
Further, GAVS+ \cite{DBLP:conf/tacas/ChengKLB11} is an algorithmic game solver with support for solving SSGs,
and GIST \cite{gist} allows for the qualitative verification of SGs.
%
%
%

\newpage
\section{Preliminaries}
\label{sec:prelims}

\subsection{Simple stochastic games}
\label{sec:prelims:sgs}
A probability distribution on a finite set $X$ is a mapping $\delta : X \rightarrow [0,1]$, such that $\sum_{x \in X} \delta(x)=1$. The set of all probability distributions on $X$ is denoted by $\Dist(X)$.   

\begin{definition}[Stochastic game (SG), e.g.,~\cite{DBLP:conf/dimacs/Condon90}]
    A stochastic turn-based two-player game is defined by a tuple $\mathcal{G}= \langle \states, \maxStates, \minStates, \initstate, \act, \Av, \delta \rangle$
	where $\states$ is a finite set of states partitioned into a set of \emph{Minimizer} $(\minStates)$ and \emph{Maximizer} $(\maxStates)$ states, respectively.
$s_0 \in \states$ is the initial state. $\act$ is a finite set of actions. $\Av : \states \rightarrow 2^\act$ assigns to every state a set of available actions.
Finally, $\delta : \states \times \act \rightarrow \Dist(\states)$ is the transition function.

Note that a Markov decision process (MDP) is a special case of an SG where either $\minStates=\emptyset$ or $\maxStates=\emptyset$ and a Markov chain is a special case of an MDP where in each state
there is only one available action.
\end{definition}

Without loss of generality, we assume SGs to be non-blocking, i.e., for all $s \in \states: \Av(s) \neq \emptyset$.
For convenience, we use the following notation: 
Given a state $s \in \states$ and an action $a \in \Av(s)$, the set of successor states is denoted as $\post(s,a) := \{s' \ | \ \delta(s,a,s') >0 \} $.
For a set of states $T \subseteq \states$, we use $T_\Box = T \cap \maxStates$ to denote all Maximizer states in $T$, and dually for Minimizer.
Figure~\ref{fig:sg-example} shows an example SG.

\begin{figure}[t]
\begin{center}
	\begin{tikzpicture}
		\drawdummy (init) at (0,0) {};
		\drawcirc (p) at (1,0) {$\initstate$};
		\drawbox (q) at (3,0) {$s_1$};
		\drawdummy (mid) at (4.25,0) {};
		\drawbox (1) at (5.5,0.5) {$\target$ };
		\drawcirc (0) at (5.5,-0.5)  {$\sink$};

		\draw[->] (init) to (p);
		\draw[->]  (p) to[bend left] node [midway,anchor=south] {$\mathsf{a}$}(q) ;
		\draw[->]  (q) to [bend left] node [midway,anchor=north] {$\mathsf{b}$} (p);
		\draw[-] (q) to node [midway,anchor=south] {$\mathsf{c}$} (mid) ;
		\draw[->] (mid) to node [below] {$\sfrac12$} (0);
		\draw[->] (mid) to node [above] {$\sfrac12$} (1);
		\draw[->]  (0) to[loop right]  node [midway,anchor=west] {$\mathsf{e}$} (0);
		\draw[->]  (1) to [loop right] node [midway,anchor=west] {$\mathsf{d}$} (1) ;
	\end{tikzpicture}
\end{center}
\caption{An example of an SG with $\states = \{\mathsf{\initstate},\mathsf{s_1},\target,\sink\}, \maxStates= \{\mathsf{s_1},\target\}$,
$\minStates = \{\mathsf{\initstate},\sink\}$, the initial state $\mathsf{s_0}$ and set of
actions $\act= \{\mathsf{a},\mathsf{b},\mathsf{c},\mathsf{d},\mathsf{e}\}$; $\mathsf{Av}(\mathsf{\initstate})=\{\mathsf{a}\}$
with $\trans(\mathsf{\initstate},\mathsf{a})(\mathsf{s_1})=1$;  $\mathsf{Av}(\mathsf{s_1})=\{\mathsf{b},\mathsf{c}\}$
with $\trans(\mathsf{s_1},\mathsf{b})(\mathsf{\initstate})=1$ and 
$\trans(\mathsf{s_1},\mathsf{c})(\target) = \trans(\mathsf{s_1},\mathsf{c})(\sink) = \frac{1}{2}$.
For actions with only one successor, we do not depict the transition probability~$1$.}\label{fig:sg-example}
\end{figure}

\para{Semantics: paths, strategies and the value.} 
Formally, an \emph{infinite path} $\rho$ is defined as $\rho=s_0a_0s_1a_1...\in (\states \times \act)^\omega$,
such that for every $i \in \Naturals$, $a_i \in \Av(s_i)$ and $s_{i+1} \in \post(s_i,a_i)$.
The set of all paths in an SG $\GG$ is denoted as $\Paths_\GG$.
A finite path is a prefix of an infinite path ending in a state $s$.
 
A Maximizer \emph{strategy} is a function $\sigma: \maxStates \to \act$ such that $\sigma(s) \in \Av(s)$ for all~$s$;
Minimizer strategies $\tau$ are defined analogously.  
We restrict attention to \emph{memoryless deterministic} strategies, because they are sufficient for the objective we consider~\cite{condonComplexity}.
By fixing both players' choices according to a pair of strategies $(\sigma, \tau)$, we turn an SG $\GG$ into a Markov chain $\Gst$ with state space $\states$ and the transition function $\trans^{\sigma,\tau}(s,s')= \trans(s,\sigma(s),s')$ for Maximizer states $s$ and dually for Minimizer with $\sigma$ replaced by $\tau$. 
Given a state $s$, the Markov chain $\Gst$ induces a unique probability distribution $\probability_{s}^{\sigma,\tau}$ over the set of all infinite paths~\cite[Sec. 10.1]{BK08}.

Since we consider SSGs, we complement an SG with a set of goal states $\targets \subseteq \states$ and formalize the objective of reaching $\targets$, as follows:
we denote as $\lozenge{\targets} := \lbrace \rho \ | \ \rho = s_0a_0s_1a_1... \in \Paths_\GG \land  \exists
i \in \Naturals. s_i \in \targets \rbrace$ the (measurable) set of all paths which eventually reach $\targets$.
We are interested in the \emph{value} of every state $s$, i.e.,\ the probability that $s$ reaches a goal state if both players play optimally. 
Formally, for each $s \in \states$, its value is defined as
\begin{ceqn}
	\begin{align}
		\val(s):= \sup\limits_{\sigma}\inf\limits_{\tau} \probability_{s}^{\sigma,\tau} (\lozenge \targets) = \inf\limits_{\tau}\sup\limits_{\sigma}\probability_{s}^{\sigma,\tau} (\lozenge \targets),
	\end{align}
\end{ceqn}
where the equality follows from \cite{condonComplexity}.
We use $\val: \states \to \Reals$ to denote the function that maps every $s \in \states$ to its value. 
When comparing functions $f_1, f_2: \states \to \Reals$, we use point-wise comparison, i.e.,\ $f_1 \leq f_2$ if and only if for all $s \in \states: f_1(s) \leq f_2(s)$.

\subsection{Value iteration and bounded value iteration}
To compute the value function $\val$ for an SSG, the following partitioning of the state space is useful: firstly the goal states $\targets$, secondly the set of \emph{sink states} that do not have a path to the target $\sinks = \lbrace s \in \states \mid \nexists \rho = s_0a_0s_1a_1... \in \Paths_\GG: s_0 = s \land \rho \in \lozenge \targets \rbrace$, and finally the remaining states $\states^?$. 
For $\targets$ and $\sinks$ (which can be easily identified by graph-search algorithms), the value is trivially 1 respectively 0. Thus, the computation only has to focus on $\states^?$.

The well-known approach of value iteration leverages the fact that $\val$ is the least fixpoint of the \emph{Bellman equations}, cf.~\cite{visurvey}:
\begin{ceqn}
	\begin{align}
		\val(s) = \begin{cases}
			1 & \text{if} \ s \in \targets \\
			0 & \text{if} \ s \in \sinks \\
			\max_{a \in \Av(s)} \Big(\sum_{s' \in \states} \trans(s,a,s') \cdot \val(s')\Big) & \text{if} \ s \in \maxStates^? \\
			\min_{a \in \Av(s)} \Big(\sum_{s' \in \states} \trans(s,a,s') \cdot \val(s')\Big) & \text{if} \ s \in \minStates^?
		\end{cases}   
	\end{align}
\label{eq:bellman}
\end{ceqn}

Now we define\footnote{In the definition of $\Bop$, we omit the technical detail that for goal states $s \in \targets$, the value has to remain 1. Equivalently, one can assume that all goal states are absorbing, i.e.,\ only have self looping actions.} the Bellman operator $\Bop: (\states \to \Reals) \to (\states \to \Reals)$:
\begin{ceqn}
	\begin{align}
		\mathsf{\Bop(f)}(s) = \begin{cases}
			\max_{a \in \Av(s)} \Big(\sum_{s' \in \states} \trans(s,a,s') \cdot f(s')\Big) & \text{if} \ s \in \maxStates \\
			\min_{a \in \Av(s)} \Big(\sum_{s' \in \states} \trans(s,a,s') \cdot f(s')\Big) & \text{if} \ s \in \minStates
		\end{cases}   
	\end{align}
\end{ceqn}

Value iteration starts with the under-approximation 
\[\lb_0(s) = \begin{cases} 
	1 & \text{if} \ s \in \targets\\
	0 & \text{otherwise} \end{cases}\]
and repeatedly applies the Bellman operator. Since the value is the least fixpoint of the Bellman equations and $\lb_0 \leq \val$ is lower than the value, this converges to the value in the limit~\cite{visurvey} (formally $\lim_{i \to \infty} \Bop^i(\lb_0) = \val$).

While this approach is often fast in practice, it has the drawback that it is not possible to know the current difference between $\Bop^i(\lb_0)$ and $\val$ for any given~$i$. 
To address this, one can employ \emph{bounded value iteration} (BVI, also known as interval iteration~\cite{atva,bvi,KKKW18})
It additionally starts from an over-approximation $\ub_0$, with $\ub_0(s) = 1$ for all $s \in \states$. However, applying the Bellman operator to this upper estimate might not converge to the value, but to some greater fixpoint instead, see~\cite[Section 3]{KKKW18} for an example.
The core of the problem are so called \emph{end components}.
\begin{definition}[End component (EC)]
	A set of states $T$ with $\emptyset \neq T \subseteq \states$ is an \emph{end component} if and only if there exists a set of actions $\emptyset \neq B \subseteq \bigcup_{s \in T}Av(s)$ such that:\\
	1. for each $s \in T$, $a \in B \cap \Av(s)$ we have $\post(s,a) \subseteq T$. \\
	2. for each $s, s' \in T$ there exists a finite path $\mathsf{w = sa_0...a_ns' } \in (T \times B)^* \times T$.
	
	An end component $T$ is a \emph{maximal end component} (MEC) if there is no other EC $T'$ such that $T \subseteq T'$.
\end{definition}

Intuitively, ECs can be problematic, because the over-approximation $\ub$ is higher in the EC than the value. Thus, Maximizer prefers staying in the EC and keeping the illusion of achieving the high $\ub$; it is an illusion, because staying will never reach a target, and Maximizer actually has to use some exit of the EC.
The solution proposed in~\cite{KKKW18} explicitly identifies these situations and forces all states in the EC to decrease their $\ub$ by making it depend on the best exit of the EC.
This operation is called \emph{deflating}, to evoke the impression of releasing the pressure in an EC that is bloated by having too high estimates.
To define deflating more formally, we need two definitions from~\cite{KKKW18}:  

\begin{definition}[Best exit]
	Given a set of states $T \subseteq \states$ and a function $f : \states \rightarrow \Reals$, the best exit
	according to $f$ from $T$ is defined as:\\
	$\bestExit_f (T) = \displaystyle \max_{\substack{s \in T_\Box, a \in \Av(s)\\ \post(s,a) \nsubseteq T}} \ \Big( \sum_{s' \in \states} \trans(s,a,s') \cdot f(s,a) \Big) $,
	
	\noindent with the convention that $\max_\emptyset=0$.
\end{definition}

\begin{definition}[Simple end component (SEC)]
	An EC T is a simple end component (SEC) if for all $s \in T$, $\val(s)= \bestExit_\val (T)$
\end{definition}

In SSGs, states in an EC can have different values. Thus, it is necessary to find the SECs. In these simple sub-parts of the EC all states have the same value, namely that of the best exit. 
By setting the over-approximation to $\bestExit_\ub(T)$ for each SEC $T$ (additionally to applying $\Bop$), we ensure that it converges to the value~\cite{KKKW18}.
As a final complication, computing SECs is difficult, since they depend on the value $\val$ that we want to compute. The solution of~\cite{KKKW18} is to use the current under-approximation $\lb$ to guess which states form a SEC and as $\lb$ converges to $\val$ in the limit, eventually we guess correctly. 

Thus, we can augment the Bellman operator with additional deflating and define an operator $\BopD_\lb: (\states \to \Reals) \to (\states \to \Reals)$. Note that it depends on an $\lb$ to guess the SECs. Given a function $\ub$, it proceeds as follows:
\begin{itemize}
	\item Apply a Bellman update $\Bop(\ub)$.
	\item Guess the SECs according to $\lb$ by using~\cite[Algorithm 2]{KKKW18}.
	\item For each SEC $T$ and all states $s \in T$, set $\ub(s) = \min(\ub(s),\bestExit_\ub(T))$. The $\min$ is only to ensure monotonicity.
\end{itemize}

In summary, BVI computes two sequences: 
the sequence of lower bounds $\lb_i = \Bop^i(\lb)$ for $i \in \Naturals$ and an additional sequence of upper bounds $\ub_i = (\BopD_{\lb})^i(\ub)$. Note that for the $i$-th application of $\BopD_\lb$, it uses the current lower bound $\lb_i$.
Both sequences converge to the value $\val$ in the limit~\cite[Theorem 2]{KKKW18}.
This allows to terminate the algorithm when the difference between the lower and upper bound is less than a pre-defined precision $\varepsilon$ and obtain an $\varepsilon$-approximation of the value.
%

\section{Optimistic Value Iteration}
\label{sec:ovi}

The idea of optimistic value iteration (OVI, \cite{ovi}) is to leverage the fact that classic VI (only from below) typically converges quickly to the correct value.
Indeed, the following ``naive'' stopping criterion results in an approximation that is $\varepsilon$-close in all available realistic case studies:
stop when for all $s \in \states$ applying the Bellman update does not result in a big difference, i.e. $\diff(\lb(s),\Bop(\lb)(s)) < \varepsilon$,
where we use $\diff(\text{old},\text{new}) = \text{new}-\text{old}$ to denote the absolute difference between two numbers\footnote{One can also use the relative difference, i.e. $\diff(\text{old},\text{new}) = \frac{\text{new} - \text{old}}{\text{new}}$.}. 
However, the naive stopping criterion can also terminate early when the estimate still is arbitrarily wrong~\cite{bvi}.

OVI first performs classic VI with the naive stopping criterion, optimistically hoping that it will terminate close to the value.
Additionally, it uses a \emph{verification phase}, where it checks whether the result of VI  was indeed correct. 
If it was, OVI terminates with the guarantee that we are $\varepsilon$-close to the value.
Otherwise, if the result of VI cannot be verified, OVI continues VI with a higher precision $\varepsilon'$. By repeating this, at some point $\varepsilon'$ is so small that when VI terminates, OVI can verify that the result is $\varepsilon$-precise.

Our version of OVI for SSGs is given in Algorithm~\ref{alg:ovi}. 
Lines~\ref{line:ovi:whileVI}-\ref{line:ovi:updateL} are the classic VI, Lines~\ref{line:ovi:guessU}-\ref{line:ovi:updateUmonotonic} the verification phase. 
Concretely, in the verification phase we first guess a candidate upper bound $\ub$ (Line \ref{line:ovi:guessU}), so that the difference between $\lb$ and $\ub$ is small enough that, if $\ub$ indeed is an upper bound, we could terminate.
Formally, for all $s \in \states$, $\ub(s) = \diff^+ (L(s))$, 
where $\diff^+_\varepsilon(x) = \begin{cases}
	0 &\mbox{ if}~x = 0\\
	x + \varepsilon &\mbox{ otherwise}
\end{cases}$ for absolute difference\footnote{$\diff^+_\varepsilon(x) = x * (1+\varepsilon)$ for relative difference.}.
Then we apply the Bellman operator once (Line~\ref{line:ovi:updateU}) and check whether $\BopD_\lb(\ub)\leq\ub$ (Line~\ref{line:ovi:verifTermCondition}). If that holds, we know (by arguments from lattice theory) that $\val \leq \ub$, i.e.\ that $\ub$ is a valid upper bound on the value. 
Thus, since $\lb$ and $\ub$ are $\varepsilon$-close to each other and $\lb \leq \val \leq \ub$, we return an $\varepsilon$-approximation of the value (Line \ref{line:ovi:return}).
\textbf{The key difference} between the original algorithm for MDPs and the extension to SSGs is that we do not use $\Bop$ in Line~\ref{line:ovi:updateU} any more, but the Bellman operator with additional deflating $\BopD$.
On MDPs, the termination of OVI relied on the assumption that there were no ECs. This is justified, since in MDPs one can remove the ECs by ``collapsing'' them beforehand, cf.~\cite{atva,bvi}. 
On SSGs, collapsing is not possible~\cite{KKKW18}, which is why we need the new operator. 

We have addressed the case that the guessed $\ub$ can indeed be verified as an upper bound. 
In the other case where we are not (yet) able to verify it, Algorithm \ref{alg:ovi} continues applying $\BopD_\lb$ for a finite number of times (we chose $\frac{1}{\varepsilon'}$, Line \ref{line:ovi:forVerifPhase}). If for all iterations we cannot verify $\ub$ as an upper bound, the precision $\varepsilon'$ for the naive stopping criterion is increased (we chose $\frac{\varepsilon'}{2}$) and we start over (Line \ref{line:ovi:tryAgain}).

\begin{theorem}\label{thm:ovi}
	Given an SSG $\GG$ and a lower bound $\lb_0 \leq \val$, OVI$(\GG,\lb_0,\varepsilon,\varepsilon)$ terminates and returns $(\lb, \ub)$ such that $\lb \leq \val \leq \ub$ and $\diff(\ub(s),\lb(s)) \leq \varepsilon$ for all $s \in \states$.
\end{theorem}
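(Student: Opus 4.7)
The plan is to decompose Theorem~\ref{thm:ovi} into three claims: soundness of $\lb$, soundness of $\ub$, and the precision bound, plus termination.

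\textbf{Soundness of $\lb$.} Lines~\ref{line:ovi:whileVI}--\ref{line:ovi:updateL} are plain VI from below, starting with $\lb_0\le\val$. Since $\Bop$ is monotonic and $\val=\Bop(\val)$, the invariant $\lb\le\val$ is preserved by every iteration; the verification phase never rewrites $\lb$, so the invariant persists at return.

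\textbf{Soundness of $\ub$.} I would first establish three auxiliary facts about $\BopD_\lb$: (i) it is monotonic, thanks to the explicit $\min$ in the deflation step (as emphasised after the definition of $\BopD$); (ii) $\val$ is a fixpoint of $\BopD_\lb$, because $\val=\Bop(\val)$ and for every EC $T$ (whether guessed by $\lb$ correctly or not) we have $\val(s)\le\bestExit_\val(T)$ for all $s\in T$, so the $\min$ leaves $\val$ intact; and (iii) restricted to the sublattice $[\lb,\mathbf{1}]$, $\val$ is the least pre-fixpoint of $\BopD_\lb$, by the same mechanism that drives the convergence statement of~\cite[Theorem 2]{KKKW18}, where deflation plays the role that ``collapsing'' plays on MDPs. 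Given (i)--(iii), the check $\BopD_\lb(\ub)\le\ub$ at Line~\ref{line:ovi:verifTermCondition} identifies $\ub$ as a pre-fixpoint, so by Knaster--Tarski on the chosen sublattice we obtain $\val\le\ub$. For the precision bound, the candidate $\ub$ guessed at Line~\ref{line:ovi:guessU} satisfies $\ub(s)-\lb(s)\le\varepsilon'\le\varepsilon$ by definition of $\diff^+_{\varepsilon'}$, and the subsequent updates at Line~\ref{line:ovi:updateUmonotonic} take $\min$ of $\ub$ with the Bellman-plus-deflation image, so $\ub$ can only decrease while remaining $\ge\val\ge\lb$; thus $\diff(\ub(s),\lb(s))\le\varepsilon$ for every~$s$.

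\textbf{Termination.} This is where I expect to spend most of the work. Following the outer scheme of~\cite{ovi}, each failed verification attempt halves $\varepsilon'$ (Line~\ref{line:ovi:tryAgain}), so it suffices to show that once $\varepsilon'$ is small enough the verification loop of at most $1/\varepsilon'$ steps (Line~\ref{line:ovi:forVerifPhase}) is guaranteed to produce a pre-fixpoint. From $\Bop^k(\lb_0)\to\val$ and the finiteness of the state space, there is an index past which $\lb$ is so close to $\val$ that the SEC-detection heuristic of~\cite[Algorithm~2]{KKKW18} returns exactly the true SECs; beyond that point $\BopD_\lb$ coincides with the ``ideal'' operator whose iteration from any function in $[\val,\mathbf{1}]$ converges monotonically to $\val$ by~\cite[Theorem~2]{KKKW18}. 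Shrinking $\varepsilon'$ below the minimal positive gap between $\val$ and the next $\BopD_\val$-fixpoint, the guess $\diff^+_{\varepsilon'}(\lb)$ lies in the basin of $\val$ from above and a bounded number of $\BopD_\lb$-applications makes the pre-fixpoint condition fire.

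The subtlety I expect to grind on is the degenerate case in which the guess undershoots $\val$ at some coordinate: I would import the argument of~\cite{ovi} that undershooting is detected because the iterates eventually drop below $\lb$, which aborts the current verification attempt and triggers a further halving of $\varepsilon'$; after finitely many such halvings the guess lands safely above $\val$. The crucial new verification is that the added deflation does not destroy this dichotomy, which follows from fact~(ii) above: iterating $\BopD_\lb$ from any point $\ge\val$ stays $\ge\val$, because monotonicity gives $\BopD_\lb(\ub)\ge\BopD_\lb(\val)=\val$.
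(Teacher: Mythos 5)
Your proposal follows essentially the same route as the paper's proof: monotonicity of $\BopD_\lb$ with $\lb$ frozen during verification, Park induction from $\val=\lfp(\BopD_\lb)$ (via soundness of deflating) for the returned $\ub$, and a case split on whether the guess over- or under-shoots $\val$ for termination, with eventually-correct SEC detection and the constant increment of the guess guaranteeing eventual success. The only slips are cosmetic: the guess uses $\diff^+_\varepsilon$ rather than $\diff^+_{\varepsilon'}$, and an undershooting guess is discarded simply because the loop of Line~\ref{line:ovi:forVerifPhase} is bounded --- the ``iterates cross $\lb$'' check you invoke is only one of the optional speed-ups, not the mechanism the termination argument relies on.
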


\begin{algorithm}[t]
	\caption{Optimistic value iteration for SSGs.} \label{alg:ovi}
	\begin{algorithmic}[1]
		\Require SSG $\GG$, lower bound $\lb \leq \val$, precision $\varepsilon>0$ and naive precision $\varepsilon'>0$
		\Ensure $(\lb, \ub)$ such that $\lb \leq \val \leq \ub$ and $\diff(\ub(s),\lb(s)) \leq \varepsilon$ for all $s \in \states$
		\Procedure{OVI}{$\GG,\lb,\varepsilon,\varepsilon'$}
		\smallskip
		\Statex $\triangleright$ Classic VI with naive convergence criterion
		\While{for some state $s \in \states: \diff(\lb(s),\Bop(\lb)(s)) > \varepsilon'$}\label{line:ovi:whileVI}
		\State $\lb \gets \Bop(\lb)$\label{line:ovi:updateL}
		\EndWhile
		\smallskip
		
		\Statex $\triangleright$ Verification phase
		\State $\ub \gets \{s \mapsto \diff^+_\varepsilon(\lb(s)) \mid s \in \states\}$ \Comment{Guess candidate upper bound} \label{line:ovi:guessU}
		\For {$\frac 1 {\varepsilon'}$ times} \label{line:ovi:forVerifPhase}
		\State $\ub' \gets \BopD_\lb(\ub)$\label{line:ovi:updateU}
		\If {$\ub' \leq \ub$}\label{line:ovi:verifTermCondition}
		\State \textbf{return} $(\lb,\ub)$ \Comment{Found inductive upper bound}\label{line:ovi:return}
		\EndIf	
		\State For all $s \in \unknown: \ub(s) \gets \min(\ub(s),\ub'(s))$ \Comment{Ensure monotonicity} \label{line:ovi:updateUmonotonic}
		\EndFor
		\State \textbf{return} OVI($\GG,\lb,\varepsilon,\frac{\varepsilon'}{2}$) \Comment{Try again with more precision} \label{line:ovi:tryAgain}
		\EndProcedure
	\end{algorithmic}
\end{algorithm}

Our formulation of Algorithm~\ref{alg:ovi} is simpler than \cite[Algorithm 2]{ovi}, since we include only the key parts that are necessary for the proof of Theorem~\ref{thm:ovi} (provided in Appendix~\ref{app:ovi}).
Below we comment on three ways in which our algorithm can be changed, following the ideas of~\cite[Algorithm 2]{ovi}.
All these changes are not necessary for correctness or termination, but they can practically improve the algorithm.

\begin{enumerate}
	\item We can include a check $\BopD(\ub) \geq \ub$. It allows to detect whether $\ub \leq \val$, i.e.\ $\ub$ actually is a lower bound on the value. In that case, one can immediately terminate the verification phase and use $\ub$ as the new $\lb$. We include this improvement in our implementation, and it is used in almost every unsuccessful verification phase.
	\item The original version continues to update the lower bound during the verification phase. This is used for an additional breaking condition if the lower bound crossed the upper bound in some state. For clarity of presentation, we chose to separate concerns and only update the upper bound in the verification phase. This improvement never made a significant difference in our experiments.
	\item The original version used Gau{\ss}-Seidel VI, cf. \cite[Section 3.1]{ovi}, for both the lower and the upper bound. Our implementation allows the user to select whether to use classic or Gau{\ss}-Seidel VI.
\end{enumerate}

\FloatBarrier
\section{Precise topological value iteration}\label{sec:pt}

Topological value iteration (TVI, \cite{TVI1}) is a variant of VI that does not solve the whole game at once, but rather proceeds piece by piece. This can speed up convergence and help with memory issues.
Concretely, it uses the insight that the strongly connected components (SCCs) of an SSG always form a directed acyclic graph. 
Thus, one can first solve the bottom SCCs, i.e.\ the last in the topological ordering, and then proceed backwards one SCC by the next, relying on the results of the already computed successor SCCs.
This idea is not restricted to VI algorithms, but can also be used for other solutions methods like strategy iteration (SI) and quadratic programming~\cite{gandalf20}.

The evaluation of~\cite{gandalf20} showed that this can be quite useful in some cases, but also much slower in other, possibly even running into time outs on models where the normal algorithms succeed.
The reason for this is a complex problem that did not occur in the proof of correctness, as it is related to machine precision:
SCCs are not solved precisely, but only with $\varepsilon$-precision. That means that SCCs which are considered later in the computation have suboptimal information about their exits. 
This not only slows down convergence, but can even aggregate and lead to precision problems and non-termination when there is a chain of many SCCs, as we show in the following example.

\begin{algorithm}[t]
	\caption{Precise topological value iteration} \label{alg:pt}
	\begin{algorithmic}[1]
		\Require SSG $\GG$
		\Ensure The precise value $\val$ for all states in $\GG$
		\Procedure{PTVI}{$\GG$}
		\For{every SCC $T$ in reverse topological ordering}\label{line:pt:forSCC}
		\State Select arbitrary $\varepsilon$
		\State $\lb,\ub \gets$ computed by some VI-algorithm with precision $\varepsilon$ \label{line:pt:vi}
		\State Compute strategies $\sigma,\tau$ which are optimal according to $\lb$ and $\ub$ \label{line:pt:strats}
		\State Precisely compute the value $\val_{\Gst}$ of $T$ in the Markov chain $\Gst$ \label{line:pt:MCsolve}
		
		\If{For all $s \in T: 
			\begin{cases}
				\sigma(s) \in \argmax_{a \in \act(s)} \val_{\Gst}(s,a) &\text{if } s \in \maxStates\\
				\tau(s) \in \argmin_{a \in \act(s)} \val_{\Gst}(s,a) &\text{if } s \in \minStates
			\end{cases}$} \label{line:pt:if}
		\State \textbf{Return} $\val_{\Gst}$ as value for $T$. \label{line:pt:return}
		\Else
		\State Apply strategy iteration, using $\sigma$ or $\tau$ as initial strategy.\label{line:pt:SI}
		\EndIf
		\EndFor
		\EndProcedure
	\end{algorithmic}
\end{algorithm}

\begin{example}\label{ex:tvi}
	To exemplify TVI and show when its precision problems occur, we consider an SSG that is a chain of $n$ SCCs, each with one state. Every state either loops or continues to the next state, both with probability $0.5$. At the end of the chain, we go to the goal with $0.6$ and to the sink with $0.4$.
	
	Formally, $\states=\maxStates = \{t,z,s_0,s_1,\ldots,s_n\}$, where $s_0$ is the initial state and $t \in \targets$ is the only goal state.
	There only is one action $a$, so $\Av(s)=\act=\{a\}$ for all states $s\in\states$.
	For every $s_i$ with $i<n$, we have $\trans(s_i,a,s_i)=\trans(s_i,a,s_{i+1})=0.5$ and for $s_n$, we have $\trans(s_n,a,t)=0.6$ and $\trans(s_n,a,z)=0.4$. Both states $t$ and $z$ are absorbing, so they loop with probability one.
	
	Running topological bounded VI on this SSG, we first solve the bottom SCCs, i.e. $t$ and $z$, and  (by graph algorithms) infer their values of 1 and 0, respectively.
	Then we solve the SCC $\{s_n\}$ and set both its bounds to $0.6$.
	Next, for the SCC $\{s_{n-1}\}$ bounded VI returns an $\varepsilon$-precise result, as with the self-loop the precise value is only obtained in the limit. 
	Using precision of $\varepsilon=10^{-6}$, the resulting interval is $[0.5999994277954102,0.6000003814697266]$. 
	Now the imprecisions start to add up:
	when solving the next SCC $\{s_{n-2}\}$, we depend on the imprecise bounds for $\{s_{n-1}\}$.
	Thus, the progress we make in every Bellman update is smaller.
	This not only slows down convergence, but it also leads to the first $\varepsilon$-precise interval being $[0.5999994099140338,0.6000003933906441]$. 
	So when BVI for the SCC $\{s_{n-2}\}$ terminates, both the lower and the upper bound are less precise than in the previous SCC.
	In state $s_{n-19}$, this imprecision has aggregated such that the computation is stuck at the interval $[0.5999994000000000,0.6000004000000001]$, where the difference is larger than~$\varepsilon$. 
	Even though theoretically we make progress with a Bellman update, this progress is smaller than machine precision, so practically we can neither converge nor terminate.
	
	Note that the SSG in this example is a Markov chain, so this problem occurs not only in SSGs, but already in Markov chains and~MDPs.\qee

\end{example}

We address this problem by introducing the precise-topological-optimization (PTVI, see Algorithm~\ref{alg:pt}).
The idea of PTVI is that, after an SCC has been solved with $\varepsilon$-precision (Line~\ref{line:pt:vi}), we first extract the strategies for both players from the result (Line~\ref{line:pt:strats}) and then compute the exact value of all states in the SCC under this pair of strategies (Line~\ref{line:pt:MCsolve}). Finally, we use a simple local check to verify that this is indeed the optimal value (Line~\ref{line:pt:if}).
If it is, we return the precise values that the next SCCs can safely depend on (Line~\ref{line:pt:return}). 
If it is not, then we have to continue with some precise solution method (Line~\ref{line:pt:SI}). Since we have just extracted near-optimal strategies, it makes sense to continue with SI, see e.g.,\ \cite[Section 3.2]{gandalf20}.
For details on the selection of the strategies and the proof of Theorem~\ref{thm:pt}, see Appendix~\ref{app:pt}.
\begin{theorem}\label{thm:pt}
	Algorithm~\ref{alg:pt} returns the precise solution $\val$.
\end{theorem}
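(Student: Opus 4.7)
The plan is to prove Theorem~\ref{thm:pt} by induction on the reverse topological ordering of SCCs used in the outer loop of Algorithm~\ref{alg:pt}. The base cases are the trivial SCCs consisting of goal states $\targets$ (precise value $1$) and of sink states $\sinks$ (precise value $0$), which are detected by preprocessing graph reachability. For the inductive step I fix an SCC $T$ and assume that for every SCC $T'$ strictly below $T$ in the topological order the algorithm has already returned the precise value of all states of $T'$. Then the Bellman equations restricted to $T$ only reference values in $T$ itself and in strictly lower SCCs, so by the induction hypothesis the exit values are fixed constants and the computation on $T$ can be viewed as an SSG with known boundary values.

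With this viewpoint, I would analyse the two branches of the algorithm separately. In the if-branch (Line~\ref{line:pt:if}), the checked condition says exactly that $\sigma$ picks, in every Maximizer state of $T$, an action maximising the expected value computed under the payoff vector $\val_{\Gst}$, and that $\tau$ does the dual for Minimizer. Equivalently, $\val_{\Gst}$ satisfies the Bellman optimality equations on $T$ with the already-precise boundary, and $(\sigma,\tau)$ is a pair of mutual best responses in the induced sub-game. By the classical determinacy and optimality theorem for SSGs with memoryless deterministic strategies~\cite{condonComplexity}, such a Nash equilibrium is globally optimal, so $\val_{\Gst}(s)=\val(s)$ for every $s\in T$, which justifies Line~\ref{line:pt:return}. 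In the else-branch, strategy iteration is invoked on the reduced SSG with a warm-start strategy; since strategy iteration terminates with the exact value, the result returned for $T$ is again precise, and the induction proceeds.

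The main obstacle is justifying the leap in the if-branch from a purely local one-step optimality check to global optimality of the strategies. The subtlety is that in a general SSG the Bellman operator can have multiple fixpoints because of end components, so satisfying $\Bop(f)=f$ does not by itself identify $f$ with $\val$. The way out is to argue not about an abstract fixpoint but about the concrete Markov chain $\Gst$: since $\val_{\Gst}$ is the actual reachability probability under a memoryless deterministic play, and the local check certifies that neither player has a profitable one-step deviation, the one-shot deviation principle for reachability objectives implies that no alternative memoryless deterministic strategy is profitable either, and determinacy of SSGs closes the argument against arbitrary opponent strategies. Everything else, namely the inductive composition across SCCs and the two trivial base cases, is routine bookkeeping once this core step is in place.
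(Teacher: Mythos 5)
Your overall architecture --- induction over the reverse topological order with goal/sink base cases, dismissing the else-branch via correctness of strategy iteration, and reducing the if-branch to showing that $(\sigma,\tau)$ is a pair of mutually optimal strategies whose guaranteed values sandwich $\val$ --- matches the paper's proof, which phrases the same idea as one policy-improvement check in each of the opponent MDPs $\Gsigma$ and $\Gtau$ followed by $\val_{\Gst}=\val_{\Gsigma}\leq\val_{\GG}\leq\val_{\Gtau}=\val_{\Gst}$. The gap sits exactly at the step you yourself single out as the main obstacle: the one-shot deviation principle does \emph{not} hold for undiscounted reachability, and it fails precisely because of the end components you mention. For the Maximizer half the argument can be closed without it: the local check makes $\val_{\Gst}$ a fixpoint of the max-Bellman operator of $\Gtau$, hence $\val_{\Gst}\geq\val_{\Gtau}$ because $\val_{\Gtau}$ is the \emph{least} fixpoint, while $\val_{\Gst}\leq\val_{\Gtau}$ holds trivially. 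For the Minimizer half this does not close: being a fixpoint only yields $\val_{\Gst}\geq\val_{\Gsigma}$, which is the trivial direction.

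Concretely, take an SCC $T=\{s_1,s_2\}$ with $s_1,s_2\in\minStates$, actions $a$ from $s_1$ to $s_2$ and $b$ from $s_2$ to $s_1$, and exit actions $c$ at $s_1$ and $d$ at $s_2$, each reaching $\targets$ with probability $0.5$ and $\sinks$ with probability $0.5$. The true value on $T$ is $0$ (Minimizer cycles forever), but the pair $\tau(s_1)=c$, $\tau(s_2)=d$ gives $\val_{\Gst}\equiv 0.5$ on $T$ and \emph{passes} the local check, since $\val_{\Gst}(s_1,a)=\val_{\Gst}(s_2)=0.5=\val_{\Gst}(s_1,c)$ and dually at $s_2$: no one-step deviation is profitable, yet the simultaneous deviation to $(a,b)$ is. So ``no profitable one-step deviation'' does not imply ``mutual best responses,'' and determinacy cannot be invoked at that point. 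Repairing the Minimizer side requires something beyond the check itself --- for instance, that $\tau$ is extracted from a sufficiently converged upper bound (so it already commits to staying inside end components from which Minimizer can avoid $\targets$), or a preprocessing step identifying the states where Minimizer forces value $0$, after which the fixpoint on the remaining states is unique. The paper's own appeal to policy-iteration optimality in $\Gsigma$ glosses over the same issue, but your write-up rests the whole if-branch on the one unjustified (and, as stated, false) principle, so this is the step you must replace.
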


The strength of PTVI is the simple local check that allows it to conclude that the estimates for an SCC are precise. 
It relies on guessing both strategies. This differs from guessing an upper bound, as OVI does; or guessing one strategy, as in SI with a warm start~\cite[Section 4.3]{gandalf20}.
We emphasize that even using the classical naive stopping criterion in Line~\ref{line:pt:vi}, this local check succeeded on more than 99\% of the case studies, and thus the additional steps of Line~\ref{line:pt:SI} are almost never necessary. Using bounded VI in Line~\ref{line:pt:vi}, we immediately succeeded on all case studies.
In contrast, the first verification phase of OVI --- having the same estimates and thus ``information'' as PTVI has when performing the local check --- succeeded only for 15\% of the random case studies; in 85\% of the random cases as well as several larger real case studies OVI had to perform additional verification phases.

Note that PTVI can be seen from different directions.
(i) It is a practical fix of TVI~\cite{TVI1}. 
(ii) It is a new way to make classical VI return a precise result, which is more efficient than running for an exponential number of steps and rounding as described in~\cite{visurvey}. 
(iii) It is a warm start for SI, in the seldom case that the SI phase of the algorithm (Line~\ref{line:pt:SI}) is necessary. 
(iv) Just like OVI, it optimistically iterates the lower bound and then uses guessing to verify this guess. However, unlike OVI it produces a precise result, albeit at the cost of solving a Markov chain precisely; and it uses the information available at the time of guessing more efficiently, succeeding on the first check more often OVI.
\section{Random Generation of Simple Stochastic Games} \label{sec:randomGen}
In order to properly evaluate and compare our algorithms, we need a diverse set of benchmarks. However, to the best of our knowledge, there are only 12 SSG case studies modelling real world problems and 3 handcrafted models for theoretical corner cases.
Since the underlying structure of a model greatly affects the runtime of algorithms~\cite{gandalf20}, only scaling these few models is insufficient.
Thus, we propose an algorithm for random generation of SSG case studies, which enables us to test our algorithms on a broader spectrum of models.

Moreover, as we are interested in the relation between our verification algorithms and certain features of the model structure, our implementation also allows for skewing the probability distribution towards models that exhibit certain features.
This is very useful, since it allows us to test our algorithms on models whose features were not considered before (e.g., large number of actions per state, etc.).
In particular, we provide: \textbf{(i)} parameters to tune features that can be affected by parameters of single states (e.g., the size, percentage of Minimizer states, actions per state, etc.).
For example, if for each state the probability of being a Maximizer or Minimizer state is equal, we get 50\% Minimizer states on average. Similarly, by choosing a high probability of adding another action to a state during the generation, we obtain states with up to 90 actions and an average around 7;
\textbf{(ii)} more involved guidelines to affect features which depend on the interactions of several states (e.g., the number and size of SCCs and ECs, etc.).
Intuitively, to obtain an SCC or MEC of a certain size $n$, we have to restrict the choice of successors during the transition or action generation to ensure that there are $n$ strongly connected states.

We provide a detailed description of our random generation algorithm in Appendix~\ref{app:randomGen}.
There, we also prove that it can generate every possible SSG with positive probability and describe and discuss the aforementioned guidelines.
Additionally, we give a detailed analysis of model features for all random case studies used in the evaluation, as well as a comparison to the features of the real case studies in
Appendix~\ref{app:modelAnalysis}.
\section{Experiments}\label{sec:exp}

In this section we talk about the practical evaluation of our algorithms and the comparison to the state of the art. 
First, we describe the setup in Section~\ref{sec:exp:setup}. 
Then we give a general overview in Section~\ref{sec:exp:overview} before analyzing the algorithms' performance in more detail in Sections~\ref{sec:exp:precise-details} and~\ref{sec:exp:approx-details}.

\subsection{Experimental setup}\label{sec:exp:setup}

\para{Algorithms.}
Our implementation is based on PRISM-games~\cite{prismgames3} and available at \url{https://github.com/ga67vib/Algorithms-For-Stochastic-Games}.

We compare to the following algorithms from related work: classical value iteration (\VI,~\cite{visurvey}), bounded value iteration ($\BVI$,~\cite{KKKW18}) and the improvement of bounded value iterations based on widest paths ($\WP$,~\cite{widest}).
Moreover, as a representative of a competitor yielding a precise result, we implemented a precise variant of strategy iteration ($\SI$), which relies on linear programming for solving the opponent MDP.

The new algorithms are optimistic VI ($\OVI$, Section~\ref{sec:ovi}) and the precise topological version of VI (Section~\ref{sec:pt}). For the latter, we give two variants with different stopping criteria in Line~\ref{line:pt:vi} of the algorithm: $\PTVI$ uses the naive criterion and $\PTBVI$ the $\varepsilon$-guaranteed one.
Finally, we consider several optimizations, but their analysis is delegated to Appendix~\ref{app:exp:opts} due to space constraints.
Quite surprisingly, for all optimizations, their impact can be positive or negative on different models.

\para{Case studies.}
We consider case studies from three different sources: 
(i) all real case studies that were already used in~\cite{gandalf20}, and are mainly part the PRISM benchmark suite~\cite{PRISMben};
(ii) several handcrafted corner case models: haddad-monmege (the adversarial example from~\cite{bvi}), BigMec and MulMec (a single big MEC or a long chain of many small MECs from~\cite{gandalf20}), as well as two new models to analyse the behaviour of \OVI\ and one large model with many SCCs;
(iii) randomly generated models as discussed in Section~\ref{sec:randomGen}. 
Note that throughout our experiments, we omitted models solved by pre-computations.

\para{Technical details.}
We conducted the experiments on a server with 64 GB of RAM and a 3.60GHz Intel CPU running Manjaro Linux. We always use a precision of $\varepsilon=10^{-6}$. The timeout was set to 15 minutes and
the memory limit was 6 GB for all models except for large models ($\geq$ 1,000,000 states). For the large models, the timeout was set to 30 minutes and the memory limit to 36 GB.

\subsection{Overview}\label{sec:exp:overview}

\begin{figure}[h!]
	\centering
	\subfloat[
	Real models]{{
			\includegraphics[width=0.48\textwidth]{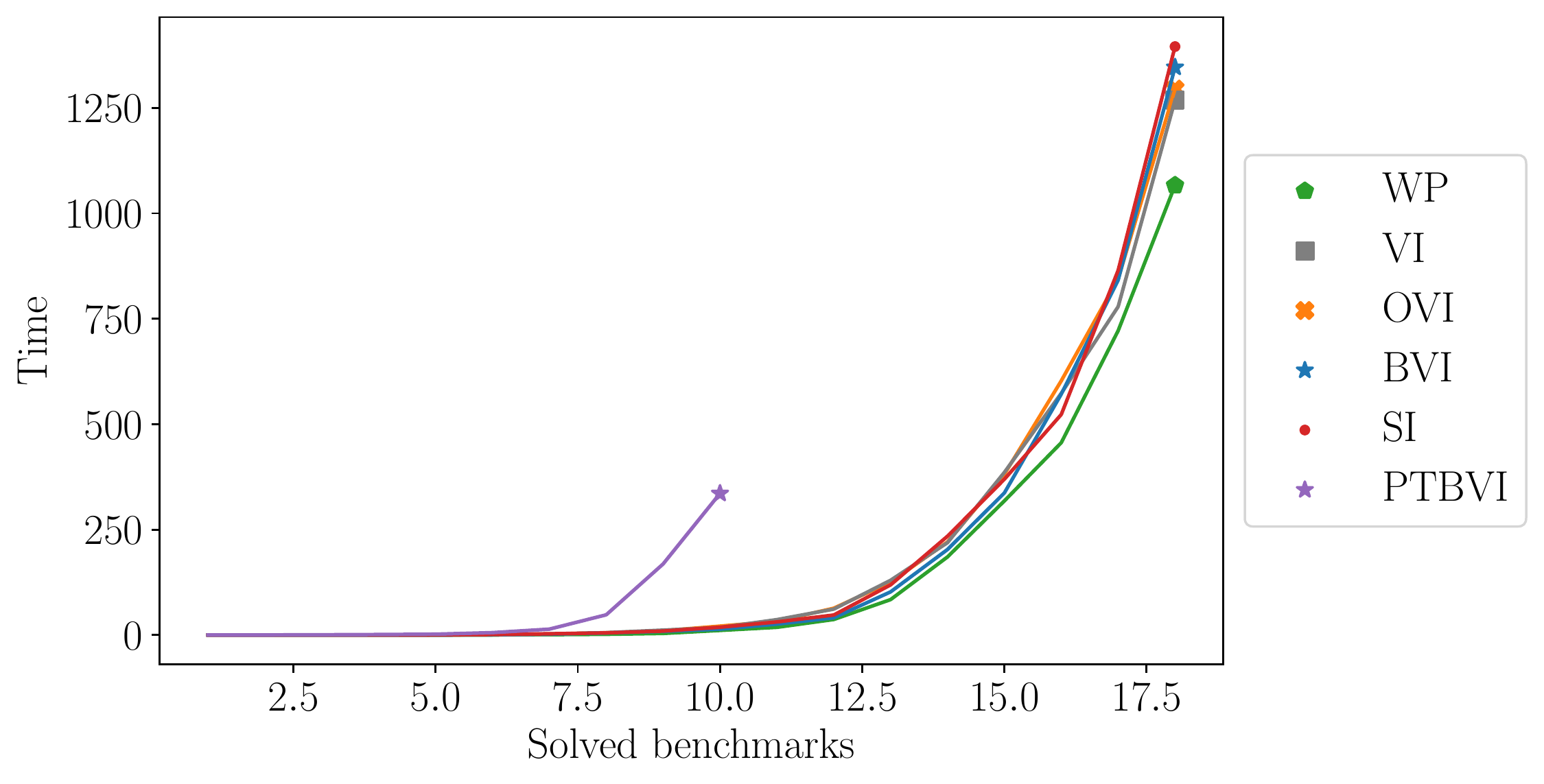} }}%
	\
	\subfloat[
	Random models]{{\includegraphics[width=0.48\textwidth]{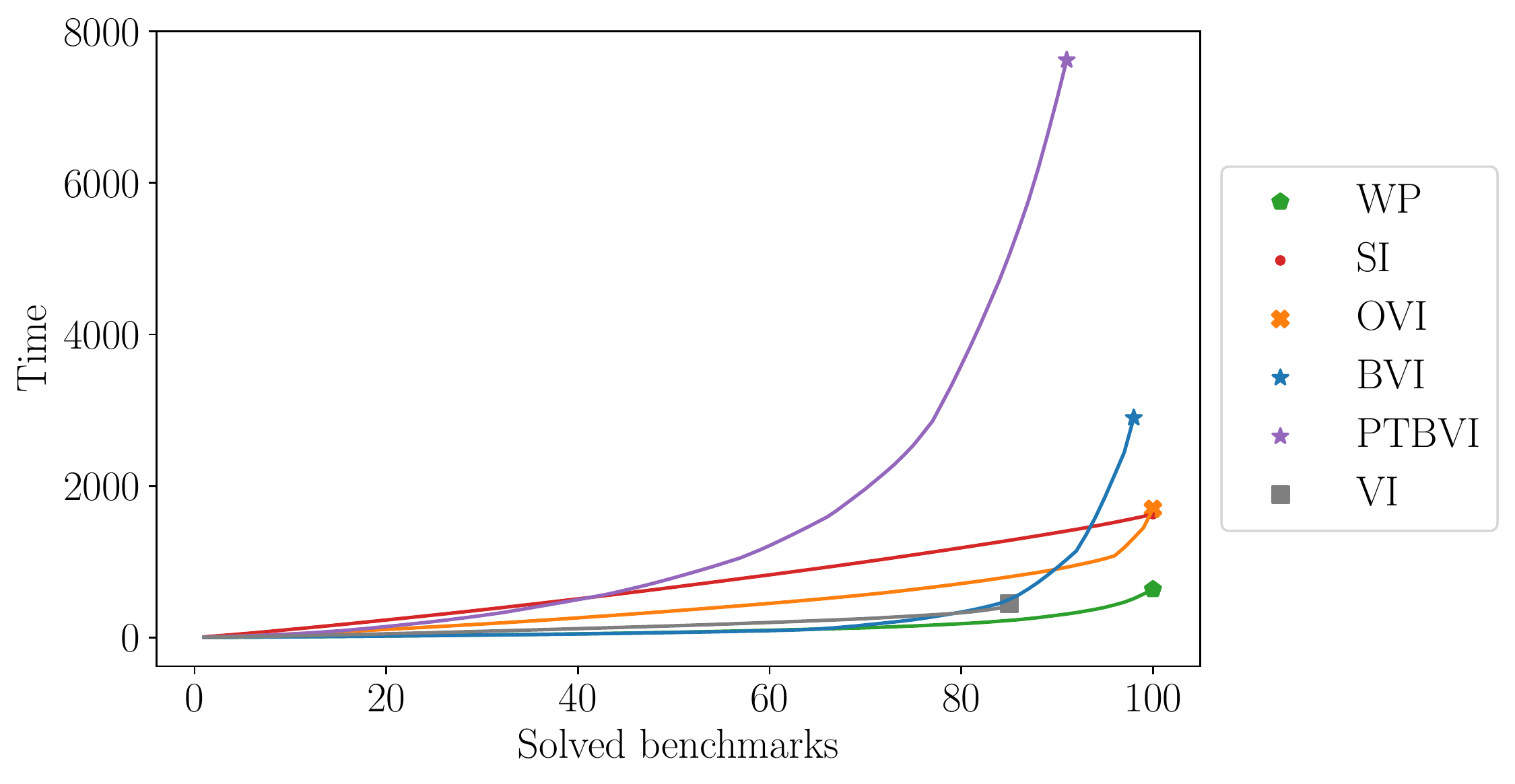} }}%
	\caption{Overview of the performance of the main algorithms on the \textit{real} and \textit{random} case studies. See Section~\ref{sec:exp:overview} for a description.}%
	\label{fig:BenchTime}%
\end{figure}

Figure~\ref{fig:BenchTime} gives an overview of the performance of the algorithms on the real and random case studies. 
The plots depict the number of solved benchmarks (horizontal axis) and the time it took to solve them (vertical axis). For each algorithm, the benchmarks are sorted in ascending order by verification time. A line stops when no further benchmarks could be solved.
Intuitively, the further to the bottom right a line extends, the better.
The algorithms shown in the legend on the right are sorted based on their performance, in descending order.
Note that these plots have to be interpreted with care, as they greatly depend on the selection of benchmarks. 

The precise algorithms provide harder guarantees, so we expect them to be slower. 
This is visible for $\PTBVI$, which is slower and solves less benchmark than others.
Still, $\PTBVI$ is optimal on certain kinds of models, as we detail in Section~\ref{sec:exp:precise-details}.
Surprisingly, $\SI$ performed very well, even competing with the approximate algorithms $\BVI$, $\OVI$ and $\WP$. 
However, this comes from the model selection, particularly of the random models. Firstly, they exhibit very small transition probabilities, since we wanted the models to be hard for VI so that we can distinguish the different stopping criteria. This slows down convergence of VI, but does not affect SI. 
Secondly, they contain few states, so using a linear program is feasible.
In Appendix~\ref{app:exp:pt}, we show that as model size increases, $\SI$ becomes less viable.

The algorithms giving $\varepsilon$-guarantees are overall quite comparable.
This was also the case in the evaluation of~\cite{ovi}, where the authors note that \emph{``for probabilistic reachability, there is no clear winner''}.
In Section~\ref{sec:exp:approx-details}, we give more details on how the performance of certain algorithms is affected by the structural features of a case study.
Note that we included classical \VI\ as a baseline, even though it gives no guarantees. It returned wrong results on two random models as well as the handcrafted haddad-monmege and MulMec.

Finally, it is important to note that random models of size 10,000 were already very hard for all algorithms, while some real models with more than 100,000 states could be solved quickly. This confirms the hypothesis of~\cite{gandalf20} that the graph structure of an SG (e.g.,\ number of actions per state, depth of topological ordering, connectedness) is more important than its pure size.

\subsection{Detailed analysis of precise algorithms.}\label{sec:exp:precise-details}
\PTVI\ and \PTBVI\ are able to solve the chain of SCCs MulMec where normal topological VI~\cite{gandalf20} was stuck, so we achieved our original goal.

We use scatter plots to evaluate the algorithms' performance in detail. Each point in a scatter plot denotes a model. 
If a point is below the diagonal, the algorithm on the horizontal axis required more time to solve it than the corresponding algorithm on the vertical axis and vice versa.
The two lines next to the diagonal mark the case where one algorithm was twice as fast as the other.

Figure~\ref{fig:PTBVIvsTLPSIvsWP} shows a scatter plot of \PTBVI\ (which performed better than \PTVI) versus the precise \SI\ and the approximate, but very performant \WP.
While on smaller models \PTBVI\ does not perform very well (Figure~\ref{fig:PTBVIvsTLPSIvsWP}(a)), on larger models it often outperforms $\SI$,  
in many cases halving the runtime or even reducing it by an order of magnitude, as shown in Figure~\ref{fig:PTBVIvsTLPSIvsWP}(b). 
We conjecture that this comes from the fact that \SI\ has to solve a linear program multiple times, while \PTBVI\ only guesses the optimal strategies once and then solves a single Markov chain.
We emphasize that \PTBVI\ never had to resort to actually performing strategy iteration, because it guessed the correct strategies in all case studies.
Moreover, \PTBVI\ even beats the best approximate method, \WP, in sufficiently large instances that contain multiple chained SCCs.
In summary, \PTBVI\ is a promising alternative to \SI\ when needing precise solutions, especially on large models with chains of SCCs.

\begin{figure}[t]
	\centering
	\subfloat[][\centering
	Comparison on all models except large
	]{\includegraphics[width=.45\textwidth]{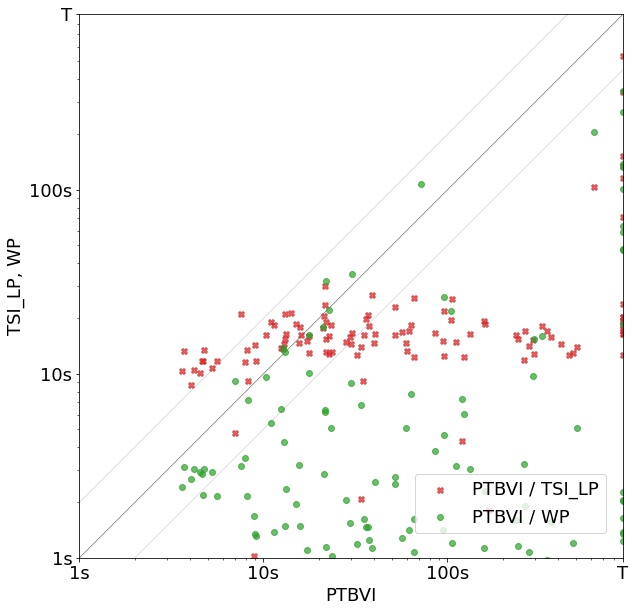} }%
	\
	\subfloat[
	Comparison on large models]{\includegraphics[width=.45\textwidth]{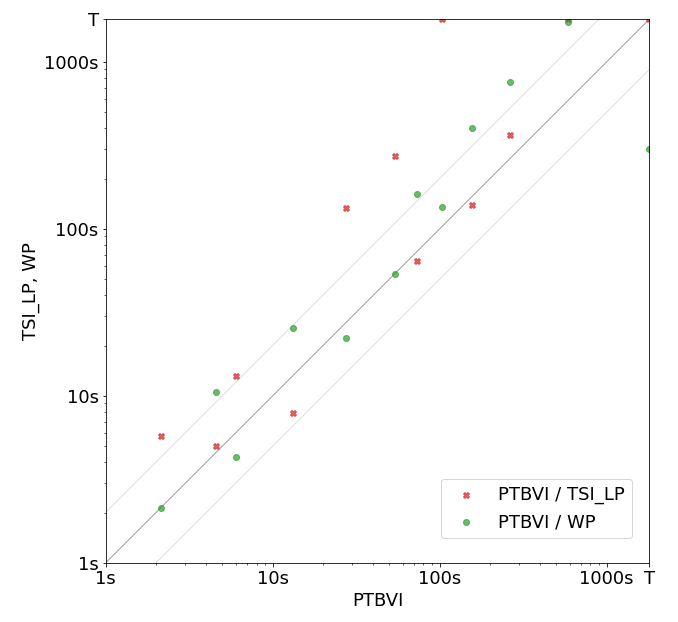} }%
	\caption{\PTBVI\ compared to $\SI$ and \WP on all datasets.}
	\label{fig:PTBVIvsTLPSIvsWP}%
\end{figure}

\subsection{Detailed analysis of approximate ($\varepsilon$-precise) algorithms}\label{sec:exp:approx-details}

All $\varepsilon$-precise algorithms perform similarly well.
\WP\ has the smallest accumulated runtime (Figure~\ref{fig:BenchTime}), no models where it is significantly worse than \BVI\ (Appendix~\ref{app:exp:bvi}) and only few models where it is significantly worse than \OVI\ (Figure~\ref{fig:OVIvsWPvsBVI}). 
As already observed in~\cite{widest}, it is particularly good when there are several or many MECs (especially on the handcrafted MulMec).
Thus, it is a valid initial choice except when the models are large with a chain of big SCCs, where we concluded in Section~\ref{sec:exp:precise-details} that \PTBVI\ is better.

We analysed \OVI\ in more detail to find our what features of the model affect its performance.
Details validating the following statements are provided in Appendix~\ref{app:exp:add}.
Intuitively, \OVI\ outperforms the other algorithms when the lower bound quickly converges, but the upper bound does not.
Dually, if the lower bound converges slowly, this is problematic for \OVI.
Note that there are many hyper-parameters of \OVI, for example the number of steps in the verification phase or the modification of the precision after a failed verification phase. 
We conjecture that these parameters affect the runtime and the choice can be improved; however, it is unlikely that there are parameter choices suitable for all kinds of models.

\begin{figure}[t]
    \centering
	\subfloat[
	Real and Handcrafted models ]{\includegraphics[width=.45\textwidth]{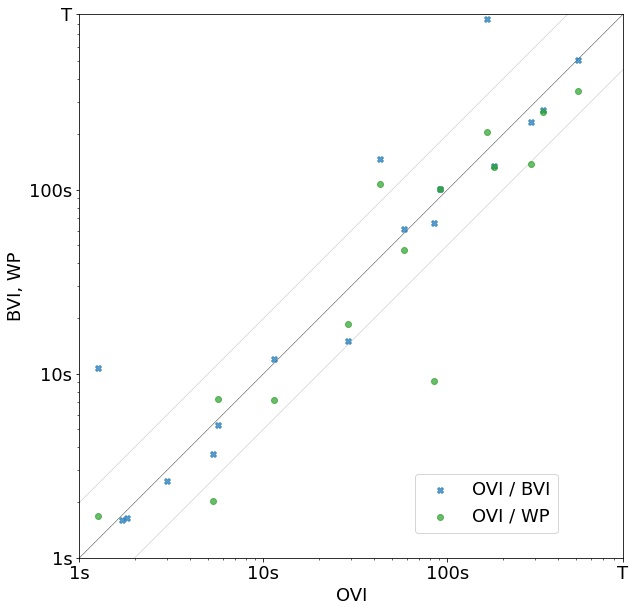} }%
    \
    \subfloat[
    Random models]{\includegraphics[width=.45\textwidth]{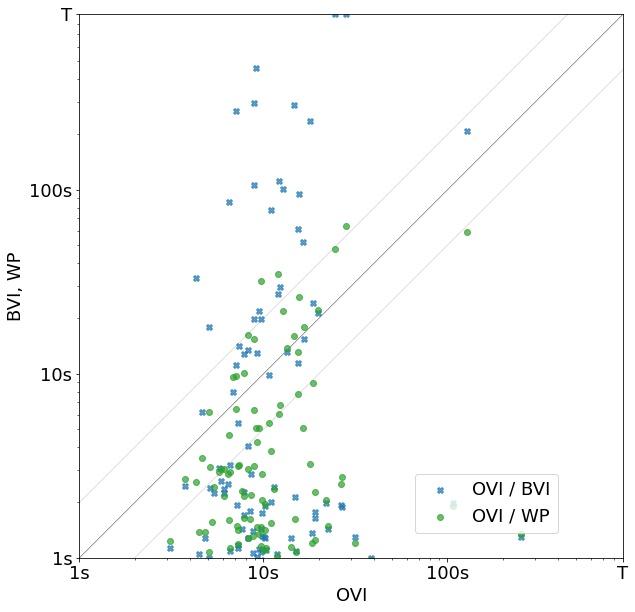} }%
	\caption{OVI compared to BVI and WP on all datasets.}
    \label{fig:OVIvsWPvsBVI}%
    \end{figure}
\section{Conclusion}
\label{sec:conc}

We extended optimistic VI from MDPs to SSGs. 
Moreover, using the ``optimistic'' idea, we fixed the issue of topological VI, so that it works even in the case of deeper models with more SCCs arranged in longer chains in the topological order.
Besides, this fix also makes the method return the exact result.
While this may be at the cost of a higher runtime, it becomes the only option when the overall model is very large, so that per-SCC analysis becomes unavoidable, and deep, so that precise values must be computed to converge at all. PTVI can be viewed as a separate algorithm or as an optimization on top of any approach from which a strategy can be extracted.

The experimental results show that the algorithms are of comparable performance, especially on real models from the standard benchmark sets.
However, an in-depth analysis of the handcrafted and random models reveals that the performance of these algorithms is often sensitive to the
underlying graph structure and, thus, their performance can vary accordingly.
While we discuss some rules of thumb as to which algorithm is to be used for a particular benchmark, a part of the future work is to provide clearer and more algorithmic recommendations.
An interesting direction here might also be to apply machine learning to recommend the most appropriate algorithm, as done for software model checkers already a few years ago, e.g., \cite{DBLP:conf/sigsoft/CzechHJW17}.

Moreover, we introduced a random generator, capable of producing various patterns even to extreme degrees. While this is very useful to find bugs and corner cases, many of the patterns need not be realistic. Consequently, we introduce a powerful set of tools to bias the generation. Nevertheless, future work shall amend this spectrum of tools with further hyper-parameters and approaches. We hope to hereby establish the platform for the community to contribute, complementary to benchmark sets \cite{PRISMben,DBLP:conf/tacas/HartmannsKPQR19}.
\bibliographystyle{splncs04}
\bibliography{ref}
\begin{appendix}
   \newpage

\section{Proof of Theorem~\ref{thm:ovi}}\label{app:ovi}

We first give an overview of the proof and delegate the technical details to the proofs of the lemmata below.

\textbf{Correctness:}	
	Since $\lb$ is computed by a classic VI variant, we know that $\lb \leq \val$ by e.g., \cite{visurvey}. We have guessed $\ub$ such that $\diff(\ub(s),\lb(s)) \leq \varepsilon$ for all $s \in \states$.
	It remains to show that upon termination $\val \leq \ub$, i.e.\ that $\ub$ is a correct upper bound.
	
	The algorithm terminates if and only if it has found an upper bound with $\ub' = \BopD_\lb(\ub) \leq \ub$ (see Lines~\ref{line:ovi:verifTermCondition} and~\ref{line:ovi:return}).
	To show that this implies our goal $\ub \geq \val$, we apply standard arguments from lattice theory~\cite[Theorem 8.20]{lattice-book}\footnote{This is what was called \emph{Park induction} in~\cite{ovi}. Since we were not able to access the work cited in that paper, we use a variant of the same claim from another textbook.}. This is the point of Lemma~\ref{lem:ovi_corr}.
	For it, we need that $\BopD$ is a monotonic operator, which is shown in Lemma~\ref{lem:ovi-bop-monotonic}.
	
	\textbf{Termination:} 
	We prove this in Lemma~\ref{lem:ovi-term} by a case distinction on the relation between $\ub$ and $\val$, similar to~\cite{ovi}, but aggregating some cases and including an argument about the additional deflating.

\begin{lemma}\label{lem:ovi-bop-monotonic}
	$\BopD_\lb$ is monotonic, i.e.\ $f_1 \leq f_2$ implies $\BopD_\lb(f_1) \leq \BopD_\lb(f_2)$.
\end{lemma}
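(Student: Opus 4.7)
The operator $\BopD_\lb$ is a composition of three ingredients: (i) one application of the standard Bellman operator $\Bop$, (ii) a procedure that identifies a collection of SECs, and (iii) a sequence of deflating updates, one per SEC $T$, that replace the current value at each $s \in T$ by $\min(\,\cdot\,,\bestExit_{(\cdot)}(T))$. My plan is to show that each ingredient is monotonic in the input function and then invoke the fact that compositions of monotonic maps are monotonic.

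First, I would observe the critical design point that separates this version of deflating from the one in \cite{KKKW18}: the family of SECs used in step (ii) is computed from $\lb$, which is a parameter of the operator and does \emph{not} depend on the argument $f$ of $\BopD_\lb$. Hence, for any two inputs $f_1 \leq f_2$, exactly the same collection of SECs (processed in the same order) is used. This is precisely the change hinted at in Section~\ref{sec:ovi} (``simple ECs must be computed differently from \cite{KKKW18}''), and without it monotonicity would fail.

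Next I would verify monotonicity of the three pieces. For $\Bop$, since each $\Bop(f)(s)$ is either a max or a min over $a \in \Av(s)$ of expressions $\sum_{s'} \trans(s,a,s') f(s')$ that are monotonic in $f$ (convex combinations with non-negative weights), and max/min of monotonic maps remain monotonic, $f_1 \leq f_2$ yields $\Bop(f_1) \leq \Bop(f_2)$. For $\bestExit_{(\cdot)}(T)$, the defining expression is a maximum over a fixed finite set of pairs $(s,a)$ (independent of the input function) of sums $\sum_{s'} \trans(s,a,s') \cdot g(s')$, which are again monotonic in $g$; hence $g_1 \leq g_2$ implies $\bestExit_{g_1}(T) \leq \bestExit_{g_2}(T)$. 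Finally, $(x,y) \mapsto \min(x,y)$ is monotonic in both arguments, so for each SEC $T$ the map $g \mapsto g'$ with $g'(s) = \min(g(s), \bestExit_g(T))$ for $s \in T$ and $g'(s) = g(s)$ otherwise is monotonic in $g$.

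Putting it together, I would then chain these observations: starting from $f_1 \leq f_2$, monotonicity of $\Bop$ gives $\Bop(f_1) \leq \Bop(f_2)$, and then processing the SECs in the (common) order used by the operator, each deflating update preserves the pointwise inequality by the previous paragraph. The resulting outputs satisfy $\BopD_\lb(f_1) \leq \BopD_\lb(f_2)$, as required. The only subtle point---and the one I expect to be the main thing to get right---is making fully explicit that the SEC collection is fixed by $\lb$ alone, so that the deflating schedule applied to $f_1$ and $f_2$ is literally the same; once that is noted, the rest is a routine composition-of-monotonic-maps argument.
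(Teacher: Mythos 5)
Your proposal is correct and follows essentially the same route as the paper's proof: decompose $\BopD_\lb$ into the Bellman update and the deflating step, observe that each is built from sums, maxima, and minima (hence monotonic), and crucially note that the SEC-candidates are determined by the fixed parameter $\lb$ rather than by the argument, so the same deflating schedule applies to both inputs. The paper makes this last point explicitly as well, remarking that the operator would not be monotonic if $\lb$ were updated during the verification phase as in the BVI of Kelmendi et al.
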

Before we prove this, we mention this is not true when using $\BopD$ as BVI from~\cite{KKKW18} does.
There, the $\lb$ used for guessing the SECs is always updated and changed in every iteration. This can theoretically lead to non-monotonic behaviour.
Thus, it is important that we fix $\lb$ (and the implied set of SEC-candidates) for the whole verification phase. 
For Algorithm~\ref{alg:ovi}, this is obviously true as we do not update $\lb$ during the verification phase. 
However, we must not continue iterating the lower bound during verifications as in~\cite{ovi} or depend on the upper bound when guessing the SECs.

\begin{proof}
	$\BopD_\lb(f)$ modifies the given function two times: 
	first by applying the Bellman operator $\Bop$, which is monotonic, as addition, multiplication, taking maximum or minimum and their combinations are monotonic.
	Secondly, $\BopD_\lb(f)$ applies deflating. Since $\lb$ is fixed, the set of states affected by this is constant, namely the union of all SEC-candidates according to $\lb$. For every state in this set, $f$ is updated to $\min(f(s),\bestExit_f(T))$. The $\min$ ensures that the estimate cannot increase, and the computation of the best exit again contains only $\max$, multiplication and summation.
	Thus, $\BopD_\lb$ is a monotonic operator.
\end{proof}

\begin{lemma}\label{lem:ovi_corr}
	$\BopD_\lb(\ub) \leq \ub$ implies that $\val \leq \ub$.
\end{lemma}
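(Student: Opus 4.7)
The plan is an application of the Knaster--Tarski / Park induction principle to the monotonic operator $\BopD_\lb$, whose monotonicity is provided by Lemma~\ref{lem:ovi-bop-monotonic}. Since the hypothesis $\BopD_\lb(\ub) \leq \ub$ makes $\ub$ a pre-fixpoint, Knaster--Tarski immediately yields $\mu\BopD_\lb \leq \ub$, where $\mu\BopD_\lb$ denotes the least fixpoint of $\BopD_\lb$. The remainder of the proof is then to identify $\val$ with this least fixpoint, from which $\val \leq \ub$ follows.

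The easier half of that identification is $\mu\BopD_\lb \leq \val$, which reduces to verifying that $\val$ itself is a fixpoint of $\BopD_\lb$. On the one hand, $\Bop(\val) = \val$ holds by definition of $\val$ as the least fixpoint of the Bellman equations. On the other hand, the deflating step does not alter $\val$, because every SEC-candidate produced by~\cite[Algorithm 2]{KKKW18} is an EC $T$, and in every EC one has $\val(s) \leq \bestExit_\val(T)$ for all $s \in T$: inside $T$, Maximizer can only collect reward by eventually exiting, and any such exit contributes at most $\bestExit_\val(T)$, while Minimizer can only make matters worse for Max. Hence the $\min$ in the deflating step is a no-op on $\val$. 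Target self-loops (which technically form trivial ECs) are not identified as SEC-candidates because the invariant $\lb(t) = 1 \neq 0 = \bestExit_\lb(\{t\})$ is maintained by the initialization.

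The main obstacle is the reverse inequality $\val \leq \mu\BopD_\lb$, i.e.\ that every fixpoint of $\BopD_\lb$ lies above $\val$. Here the hypothesis $\lb \leq \val$ becomes essential: it guarantees that the SEC-candidates identified from $\lb$ are ``safe'', so that deflating cannot pull a fixpoint pointwise below $\val$. Concretely, for any fixpoint $f$ of $\BopD_\lb$, one argues $f \geq \val$ by showing that, within each SEC-candidate $T$, the cap $\bestExit_f(T)$ still dominates $\val(s)$ for every $s \in T$, and outside SEC-candidates $f$ satisfies the Bellman equation, so that $f$ is a solution from above. This mirrors the sound-deflation invariant underlying the convergence proof of~\cite[Theorem 2]{KKKW18}, specialized here to a fixed $\lb$ rather than a co-updated one; the key subtlety is that one must not use the evolving $\lb$ from BVI but rely on $\lb$ being frozen during verification (exactly the reason why Algorithm~\ref{alg:ovi} does not re-update $\lb$ in the verification phase).

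Combining both directions yields $\val = \mu\BopD_\lb \leq \ub$, as required.
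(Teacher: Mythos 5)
Your skeleton coincides with the paper's: monotonicity of $\BopD_\lb$ (Lemma~\ref{lem:ovi-bop-monotonic}) plus the Knaster--Tarski/Park principle turns the hypothesis $\BopD_\lb(\ub)\leq\ub$ into $\lfp(\BopD_\lb)\leq\ub$, and the remaining work is to identify $\lfp(\BopD_\lb)$ with $\val$. Your half $\lfp(\BopD_\lb)\leq\val$ is sound: $\Bop(\val)=\val$, and deflating leaves $\val$ unchanged because $\val(s)\leq\bestExit_\val(T)$ holds for every EC $T$ and every $s\in T$, so $\val$ is a fixpoint of $\BopD_\lb$ and the least fixpoint lies below it.

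The gap is in your argument for the reverse inequality $\val\leq\lfp(\BopD_\lb)$. You take an arbitrary fixpoint $f$ of $\BopD_\lb$ and claim $f\geq\val$ because ``the cap $\bestExit_f(T)$ still dominates $\val(s)$''. But $\bestExit_f(T)$ is computed from $f$ at the successors of the exit actions, and for an arbitrary fixpoint you have no a priori lower bound on $f$ there; the inequality $\bestExit_f(T)\geq\bestExit_\val(T)=\val(s)$ is precisely what you would obtain from $f\geq\val$, i.e.\ from the statement being proved. The argument is circular. Similarly, the observation that $f$ satisfies the Bellman equation outside SEC-candidates only yields $f\leq\Bop(f)$ globally (deflating can only decrease), and by Knaster--Tarski such functions are bounded above by the \emph{greatest} fixpoint of $\Bop$, not below by $\lfp(\Bop)=\val$. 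The paper closes this direction non-circularly: by the soundness and monotonicity of deflating (\cite[Lemma 3]{KKKW18}), deflating is the identity on every $x\leq\val$, so $\BopD_\lb$ and $\Bop$ agree on the down-set of $\val$ and generate the same ascending Kleene chain from the bottom element; that chain converges to $\val=\lfp(\Bop)$, whence $\lfp(\BopD_\lb)=\val$. To keep your per-fixpoint formulation you would have to replace the local SEC argument by such a global ascending-chain (or attractor-style) induction.
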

\begin{proof}
	We restate \cite[Theorem 8.20]{lattice-book}: Given a complete lattice $(P,\leq)$ and a monotonic self-map $f$ on this lattice, $f(x) \leq x$ implies $\lfp(f) \leq x$, where $\lfp(f)$ denotes the least fixpoint of $f$.
	
	We consider the complete lattice $(\{\ub \mid \ub: \states \to \Reals\},\leq)$ of functions mapping $\states$ to probabilities with the ordering according to pointwise comparison.
	The bottom element of the lattice is the vector that maps every state to $0$, the top element maps every state to $1$.
	
	$\BopD_\lb$ is a self-map on this lattice, as it takes a function of values for every state and returns another such function.
	Moreover, it is monotonic by Lemma~\ref{lem:ovi-bop-monotonic}.
	So, using~\cite[Theorem 8.20]{lattice-book} and instantiating $f$ with $\BopD$ and $x$ with $\ub$, we get $\BopD(\ub) \leq \ub$ implies $\lfp(\BopD) \leq \ub$.
	Thus, to prove our goal, it remains to show that $\lfp(\BopD) = \val$.

	Deflating is sound and monotonic by \cite[Lemma 3]{KKKW18}, i.e.\ it can never decrease a function $\ub$ with $\ub \geq \val$ below $\val$ and it can never increase a function. 
	So deflating any $x \leq \val$ just returns $x$ and thus for $x \leq \val$, $\BopD$ does the same thing as $\Bop$.
	From this and using that $\val$ is the least fixpoint of the Bellman operator $\Bop$, see e.g.,~\cite{visurvey}, we conclude 
	\[\val = \lfp(\Bop) = \lfp(\BopD_\lb).\]

\end{proof}

\begin{lemma}\label{lem:ovi-term}
	Given an SG $\GG$ and a lower bound $\lb_0 \leq \val$, OVI$(\GG,\lb_0,\varepsilon,\varepsilon)$ terminates.
\end{lemma}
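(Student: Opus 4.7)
The plan is to adapt the case analysis of~\cite{ovi} to the deflating Bellman operator $\BopD_\lb$. First, observe that non-termination can arise only from the recursion at Line~\ref{line:ovi:tryAgain}: for any fixed $\varepsilon' > 0$, the while-loop of Lines~\ref{line:ovi:whileVI}--\ref{line:ovi:updateL} must end because classical VI converges to $\val$ and so its Bellman residuals eventually drop below $\varepsilon'$, while the for-loop of Lines~\ref{line:ovi:forVerifPhase}--\ref{line:ovi:updateUmonotonic} has the explicit bound $\lceil 1/\varepsilon' \rceil$. So it suffices to show that, after finitely many recursive calls (each of which halves $\varepsilon'$), the verification phase succeeds. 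I would then split on whether the guessed upper bound $\ub = \lb + \varepsilon$ at Line~\ref{line:ovi:guessU} pointwise dominates $\val$.

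In the case $\ub \geq \val$, I claim the verification phase terminates in finitely many iterations. By Lemma~\ref{lem:ovi-bop-monotonic}, $\BopD_\lb$ is monotonic; and by the argument inside the proof of Lemma~\ref{lem:ovi_corr}, $\val$ is the least fixpoint of $\BopD_\lb$ once the SEC-candidates derived from $\lb$ coincide with the true SECs. Combined with the pointwise-minimum update at Line~\ref{line:ovi:updateUmonotonic}, the sequence of upper bounds is non-increasing and bounded below by $\val$, hence converges to $\val$ itself. Consequently, the check at Line~\ref{line:ovi:verifTermCondition} passes after a number of iterations that depends on $\GG$ and the SEC-candidate set but \emph{not} on $\varepsilon'$.

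In the case $\ub(s) < \val(s)$ for some $s$, the current verification may fail and OVI recurses with $\varepsilon'/2$. By convergence of classical VI ($\lim_i \Bop^i(\lb_0) = \val$, cf.\ Section~\ref{sec:prelims}), each halving of $\varepsilon'$ forces the while-loop to produce a tighter $\lb$; after finitely many recursions $\lb \geq \val - \varepsilon$ pointwise, whence $\ub = \lb + \varepsilon \geq \val$ and we re-enter the previous case. Since the verification budget $\lceil 1/\varepsilon' \rceil$ also grows unboundedly as $\varepsilon'$ is halved, it eventually exceeds the $\GG$-dependent bound of the good case, and the verification check succeeds.

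The hard part is gluing the two cases together quantitatively: I need ``$\lb$ sufficiently close to $\val$'' to simultaneously give (i) $\ub = \lb + \varepsilon \geq \val$ pointwise and (ii) SEC-candidates that coincide with the true SECs, so that the verification-phase sequence converges to $\val$ rather than to some strictly larger fixpoint of $\BopD_\lb$. Both should follow from how SEC-candidates are extracted from $\lb$ in~\cite{KKKW18} together with convergence of VI; the most technical bookkeeping is matching the $\GG$-dependent iteration count in the good case against the $\lceil 1/\varepsilon' \rceil$ budget, which I expect to transplant from the structure of the proof in~\cite{ovi}.
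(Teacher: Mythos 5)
Your overall strategy is the paper's: the same case split on whether the guessed $\ub$ dominates $\val$, the same appeal to monotonicity of $\BopD_\lb$ and to $\val$ being its (unique, once the SECs are guessed correctly from $\lb$) fixpoint, and the same argument that convergence of the lower-bound iteration eventually forces $\ub = \diff^+_\varepsilon(\lb) \geq \val$ while the verification budget grows without bound. The issues you flag at the end (coupling ``$\ub \ge \val$'' with ``SEC-candidates are correct'') are exactly the ones the paper also resolves only by assuming $\lb$ has converged far enough.

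There is, however, one step in your good case that does not go through as written: from ``the sequence of upper bounds is non-increasing, bounded below by $\val$, and converges to $\val$'' you conclude that the check $\BopD_\lb(\ub) \leq \ub$ passes after finitely many iterations. That is a non sequitur: $\ub_i \to \val$ and $\BopD_\lb(\ub_i) \to \val$ can both hold without the pointwise inequality $\BopD_\lb(\ub_i) \le \ub_i$ ever being satisfied at any finite $i$ (the operator may keep pushing some coordinate up toward the limit while the min-update clamps it). Termination of the verification phase is precisely the delicate point of OVI, and the paper handles it differently: it argues that, given enough iterations, \emph{every} state sees a strict decrease --- which is equivalent to the check holding --- and it explicitly isolates the corner case of a state whose current bound already equals its value (such a state cannot decrease, since $\ub \ge \val$ is invariant in this case), which is resolved pragmatically by letting the budget run out and re-guessing a strictly higher upper bound in the next round. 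Your proposal omits this corner case entirely, and your auxiliary claim that the required iteration count ``depends on $\GG$ \ldots but not on $\varepsilon'$'' is unsubstantiated (the starting point $\ub = \diff^+_\varepsilon(\lb)$ does depend on $\varepsilon'$ through $\lb$); fortunately that claim is not needed once one argues, as the paper does, that the budget $1/\varepsilon'$ grows unboundedly across recursive calls.
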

\begin{proof}
	We prove this by a case distinction on the relation between $\ub$ and $\val$, similar to~\cite{ovi}. 
	Our proof mainly differs in that we aggregate Cases 2 and 3 in their proof into case 1 of our proof and include an argument about the additional deflating.
	\begin{itemize}
		\item $\ub(s) \geq \val(s)$ for all $s \in \states$:
		If $\ub$ truly is an upper bound on the value, we want to terminate. Note that this is also the only case in which we can terminate, as otherwise we recursively call OVI again with more precision for VI and more iterations in the verification phase.
		Thus, since the precision of VI increases, we can assume that $\lb$ has converged close enough to $\val$ such that all SECs are detected correctly.
		Note that it is possible that we terminate even if the SECs are not yet guessed correctly. However, to prove termination, this assumption is necessary.
				
		By \cite[Theorem 2]{KKKW18}, the value function $\val$ is the unique fixpoint of the operator $\BopD_\lb$, since Bellman updates and deflating together ensure that $\ub$ converges to the value.
		
		As in~\cite{ovi}, we made the Bellman update monotonic by using Line~\ref{line:ovi:updateUmonotonic}, to avoid a particular corner case where there is an alternating increase and decrease, see~\cite[Section 4.2]{ovi}.
		
		The remaining argument is the same as in Cases 2 and 3 of the original termination proof.
		Intuitively, given enough steps (and the number of steps in the verification phase increases, so there eventually will be enough steps), all states see a decrease in value.
		The corner case of a state already having the correct value and hence not decreasing is handled in the same pragmatic way: we just abort the verification phase and try again with another upper bound that is necessarily higher than before.
		
		\item Otherwise, so if there exists a state $s \in \states$ with $\ub(s) < \val(s)$:
		In such a case, the verification phase will certainly be aborted, as we terminate if and only if $\BopD_\lb(\ub) \leq \ub$, which cannot happen when some $\ub(s) < \val(s) = \lfp(\BopD_\lb)(s)$.
		Thus, we need to abort the verification phase with this $\ub$, which happens after a finite number of iterations (see Line~\ref{line:ovi:forVerifPhase}). 
		Note that the additional breaking conditions mentioned at the end of Section~\ref{sec:ovi} help to detect this case faster.
		
		It remains to show that we cannot stay in this case forever, but that eventually we guess a $\ub \geq \val$.
		This happens, because $\lb$ converges to $\val$, see e.g.,~\cite{visurvey}.
		Since the increase of $\diff_\varepsilon^+$ for guessing the upper bound is constant\footnote{Or, in the case of relative error, can be lower bounded by $\lb(s) * \varepsilon$}, we will eventually guess an upper bound that is greater than the value and our algorithm will eventually detect that and terminate.
	\end{itemize}
\end{proof}
\section{Proofs for Section~\ref{sec:pt}}\label{app:pt}

\subsection{Details on PTVI (Algorithm~\ref{alg:pt})}\label{app:pt:details}

\paragraph{Computing optimal strategies:}

How to compute the strategies is a heuristic that should get the best from the given information of the VI-algorithm.
We phrased it such that the computation depends on both a lower bound $\lb$ and an upper bound $\ub$ in order to show how to use it with both BVI and OVI. When using naive VI, we set $\ub \eqdef \lb$.
We use $\ub$ to guess the Minimizer strategy and $\lb$ to guess the Maximizer strategy, as this is using the most conservative estimate that we have available. 

From an estimate function, we can derive a strategy by picking actions that are optimal according to this estimate, similar to the Bellman equations~\ref{eq:bellman}.
Indeed, for Minimizer this is sufficient, so we set
\[
	\tau \gets \{(s,a) \mid s \in \minStates, a \text{ picked randomly from } \argmin_{a' \in \act(s)} \ub(s,a')\}
\]
For Maximizer this is not sufficient, because Maximizer might have actions with optimal value that however do not make progress towards the target. Imagine for example a Maximizer state with an action $a$ that loops and an action $b$ leading to the goal.
While picking $a$ for a finite number of times does not decrease the value, as the state can still play $b$ and reach the goal surely, committing to playing $a$ infinitely often reduces the value to 0. 
Thus, when deriving a Maximizer strategy, we cannot just pick some optimal action.

There are several ways to deal with this problem:
firstly, to obtain a memoryless deterministic strategy, one can use graph algorithms to compute a distance measure between states in $\states_?$ and goal states; then the strategy picks actions that not only are optimal according to the estimate, but additionally reduce the distance to the goal.
This is similar to the construction used in \cite[Theorem 2]{DBLP:conf/isaac/AnderssonM09} for deriving strategies of Maximizer in a reachability game.
However, we want to avoid the additional computation time for these graph algorithms. 
Thus, we use a second approach: strengthening the notion of strategy to allow for randomization. 
A randomized strategy is map $\states \to \Dist(\act)$ that for every state $s$ returns a probability distribution over available actions $\Av(s)$. We write $(s,a,p) \in \sigma$ to say that $\sigma(s)(a) = p$; a strategy is well-defined when all triples with $p>0$ are enumerated.
Thus, we select the Maximizer strategy as
\[
\sigma \gets \{(s,a,p) \mid s \in \maxStates, a \in \argmax_{a' \in \act(s)} \lb(s,a'), p = \frac{1}{ \abs{\argmax_{a' \in \act(s)} \lb(s,a')}} \}
\]
This is correct, because every action that is optimal (according to the estimate) has positive probability to be selected. The ``staying'' actions do not decrease the value, and almost surely we eventually pick an action that makes progress towards the goal.
In the words of \cite[Lemma 5]{DBLP:conf/isaac/AnderssonM09}: the strategy is safe and stopping.
Note that this does not drastically decrease the transition probabilities in the resulting Markov chain, as there typically are less than 3 actions per state. Hence it poses no problem for the Markov chain solving.

\paragraph{Markov chain solving:}
To solve the induced Markov Chains, we use the standard equation approach as described in \cite[Chapter 11]{introProb}. 
First, the MC is represented in a transition matrix. Next, one has to find the states $s \in \states^?$ which cannot reach any target in the MC and remove their rows and columns from the matrix. 
To do so, it is usually necessary to perform an all-pairs-shortest-path algorithm like Flyord-Warshall. However, since this requires $O(|\states|^3)$ operations, and we already have $\lb$ from the VI process, we can obtain said states instead directly by checking whether $\lb(s) = 0$. 
Removing states that cannot reach any target from the transition matrix guarantees non-singularity, and thus invertability. Lastly, the part of the matrix corresponding to $\states^?$ needs to be inverted and multiplied with the part of the transition matrix corresponding to $\targets$, yielding the reachability probability for every state $s \in \states^?$ in the Markov chain.

In our implementation, we use the JAMA library\footnote{\url{https://math.nist.gov/javanumerics/jama/}} to perform matrix operations like multiplications or inversions.

\paragraph{Strategy iteration in the case of non-optimal strategies:}
The local check we perform in Line~\ref{line:pt:if} gives guarantees if and only if it succeeds. So if it fails, we know nothing about the strategies and have to perform normal SI. 
This means we can only fix one of them, as solving the resulting MDP might return a different strategy for the other player.
However, it is quite likely that both strategies are good.
Note that for SI we need memoryless deterministic strategies, so using the randomized $\sigma$ as we defined above is not possible. 
Thus it makes sense to resort to starting with $\tau$.

\subsection{Proof of Theorem~\ref{thm:pt}}\label{app:pt:proof}

\setcounter{theorem}{1}

\begin{theorem}\label{thm:pt-app}
	Algorithm~\ref{alg:pt} returns the precise solution $\val$.
\end{theorem}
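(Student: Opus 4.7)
The natural approach is induction on SCCs in reverse topological order. Observe that the SCC-DAG is acyclic, so the bottom SCCs are either subsets of $\targets$, of $\sinks$, or do not depend on any other SCC's value; by the pre-computation that identifies $\targets$ and $\sinks$, their values are already exact before the main loop. For the inductive step, fix an SCC $T$ and assume all SCCs below $T$ in the topological order have already been assigned their exact values by previous iterations of the \textbf{for}-loop in Line~\ref{line:pt:forSCC}. I would then split into the two branches of the \textbf{if}-statement in Line~\ref{line:pt:if}.

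In the branch where the local check succeeds and the algorithm returns $\val_{\Gst}$ (Line~\ref{line:pt:return}), the heart of the argument is a standard Bellman optimality lemma: if a strategy pair $(\sigma,\tau)$ satisfies, for every $s\in T_\Box$, $\sigma(s)\in\argmax_a \val_{\Gst}(s,a)$ and, dually for Minimizer, then $(\sigma,\tau)$ is a Nash equilibrium of the zero-sum reachability game restricted to $T$ (with exit values given by the already-computed successor SCCs). In that case $\val_{\Gst}=\val$ on $T$, by the usual one-step deviation principle: no local deviation improves the value, hence by summing one-step improvements no global memoryless deviation improves it, and memoryless strategies suffice for SSGs by~\cite{condonComplexity}. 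Because the inductive hypothesis gives exact values at every exit of $T$ into already-processed SCCs, the fixpoint characterization of $\val$ via the Bellman equations (restricted to $T$) applies, and $\val_{\Gst}$ agrees with this unique solution.

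In the branch where the local check fails, the algorithm hands off to strategy iteration (Line~\ref{line:pt:SI}) warm-started with $\sigma$ or $\tau$. Here I would invoke the known fact that SI on SSGs converges to the exact value in finitely many iterations, regardless of the starting strategy; the warm start affects runtime but not correctness. Thus in both branches the returned values for $T$ are exactly $\val$ on $T$, closing the induction.

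The main obstacles are two delicate points that I would need to address carefully. First, the Maximizer strategy constructed in Appendix~\ref{app:pt:details} is randomized to avoid a graph-distance computation; I must verify that the ``local optimality'' check in Line~\ref{line:pt:if}, stated as $\sigma(s)\in\argmax_a\val_{\Gst}(s,a)$, is interpreted correctly for a randomized strategy (namely: every action in the support of $\sigma(s)$ is a maximizer), and that under this strategy the induced Markov chain is stopping on $T\cap \states^?$ so that $\val_{\Gst}$ equals the reachability probability used in the Bellman equations. This requires the ``safe and stopping'' property cited from~\cite{DBLP:conf/isaac/AnderssonM09}. Second, I must ensure that the Markov-chain solving step (Line~\ref{line:pt:MCsolve}) computes $\val_{\Gst}$ on $T$ precisely, using the already-exact values at exits of $T$; this reduces to solving a non-singular linear system once states with $\val_{\Gst}=0$ are removed, as described in Appendix~\ref{app:pt:details}. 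Once both points are settled, the inductive argument above gives $\val$ on every SCC and hence on the entire game.
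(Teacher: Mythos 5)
Your plan matches the paper's proof: the topological induction, the fallback to strategy iteration when the local check fails, and the central claim that the argmax/argmin check certifies optimality of $(\sigma,\tau)$ are exactly the steps taken in Appendix~\ref{app:pt:proof}, which formalizes your ``one-step deviation'' argument as the sandwich $\val_{\Gst}=\val_{\Gsigma}\leq\val_{\GG}\leq\val_{\Gtau}=\val_{\Gst}$ obtained by reading the check as a single policy-improvement step in each opponent MDP. You also correctly flag the two delicate points---the randomized, ``safe and stopping'' Maximizer strategy and the exact solve of the induced Markov chain---which the paper addresses in Appendix~\ref{app:pt:details}.
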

\begin{proof}
	The argument for correctness of solving the SG in topological order is the same as in~\cite[Section 4.4]{gandalf20}.
		
	In the case that the condition in Line~\ref{line:pt:if} evaluates to false, there is nothing to prove, because in this case the algorithm falls back to using strategy iteration which is known to return the precise solution, see e.g.~\cite{DBLP:conf/dimacs/Condon90}.
	So we only have to prove that, if the check in Line~\ref{line:pt:if} evaluates to true, this implies that $\sigma$ and $\tau$ are optimal strategies and $\val_\Gst$ is the value function of the SG.
	
	In the following, for any SG (or MDP or Markov chain) $\GG$, we use $\val_\GG$ to denote its value function.
	Further, $\Gsigma$ and $\Gtau$ are the MDP with Maximizer strategy $\sigma$ respectively Minimizer strategy $\tau$ fixed, and $\Gst$ is the Markov chain with both strategies fixed.
	
	Intuitively, our local check amounts to performing SI on the MDPs $\Gsigma$ and $\Gtau$.
	For the proof, first consider the MDP $\Gsigma$. SI in this MDP starts by fixing an arbitrary Minimizer strategy and we choose $\tau$.
	Then it computes the value of the resulting Markov chain $\Gst$ and checks whether Minimizer wants to switch the strategy in any state, i.e.\ whether there exists an $s \in \minStates$ such that $\tau(s) \notin \argmin_{a \in \act(s)} \val_{\Gst}(s,a)$. If Minimizer wants to switch, $\tau$ is not optimal and the check evaluates to false.
	Otherwise, $\tau$ is an optimal strategy in the MDP $\Gsigma$, cf.~\cite{Puterman}.
	Thus, we have $\val_{\Gsigma} = \val_{\Gst}$.
	
	Dually, the check for the Maximizer strategy verifies that $\val_{\Gtau} = \val_{\Gst}$ by proving that $\sigma$ is an optimal strategy in the MDP $\Gtau$.
	Note here that, since we use a randomized $\sigma$, we have to check that every action to which $\sigma$ assigns positive probability is optimal, i.e. $\{a \mid \exists p>0: (s,a,p) \in \sigma\} \subseteq \argmax_{a \in \act(s)} \val_{\Gst}(s,a)$.
	
	Finally, we have to relate $\val_{\Gsigma}$ and $\val_{\Gtau}$ to the value on the actual game $\val_{\GG}$.
	Observe that by fixing an arbitrary Maximizer strategy, we can only decrease the value, as it might be suboptimal, but we can never increase the value. Formally, $\val_{\Gsigma} \leq \val_{\GG}$.
	Dually, $\val_{\GG} \leq \val_{\Gtau}$.
	Putting everything together, we have 
	\[
	\val_{\Gst} = \val_{\Gsigma} \leq \val_{\GG} \leq \val_{\Gtau} = \val_{\Gst}.
	\]
	This proves that indeed both strategies are optimal in $\GG$ and $\val_{\Gst}$ is the correct value function.
	
\end{proof}
\section{Additional details for randomly generated models} \label{app:randomGen}

\subsection{Our algorithm for random model generation} \label{sec:randomGenAlgo}

We use Algorithm \ref{alg:randomRandom} to create any random stochastic game that is connected from the initial state.
After initialization, the algorithm has two phases: the forward and the backward procedure.
During initialization, we generate a random number $n$ and create a set of states $\states := [n]$. Also, we assign states at random to either Maximizer or Minimizer.
In the forward procedure, we iterate over every state $\state \in \states$ and make sure that a previous state $\state'$ is connected to it by providing an action with positive transition probability to $\state$ from $\state'$.
This guarantees that the initial state can reach every state in the stochastic game.
The backward procedure then adds arbitrary actions to arbitrary states to enable generating every possible SG.

To generate the actions of a state, we use Algorithm \ref{alg:FillActions}. 
It receives a state-action pair $(\state, \action)$ where $\sum_{\state' \in \states} \trans(\state, \action, \state') < 1$.
It then increases the transition probability of a randomly selected state $\state' \in \states$ where $\trans(\state, \action, \state') = 0$. 
This is repeated until $\sum_{\state' \in \states} \trans(\state, \action, \state') >= 1$ or there is no state $\state'$ that is not yet reached by $(\state, \action)$.
In case $\sum_{\state' \in \states} \trans(\state, \action, \state') < 1$ holds but the state-action pair is reaching every state in $\states$, 
we increase the most recently increased transition probability such that the resulting distribution is a probability distribution.
If we reach $\sum_{\state' \in \states} \trans(\state, \action, \state') > 1$, we reduce the transition probability we increased most recently.
After applying Algorithm \ref{alg:FillActions}, $(\state, \action)$ is a valid transition distribution, i.e.\ its probabilities sum up to~1.

\begin{algorithm}[ht]
    \caption{Generating random models connected from initial state}
    \label{alg:randomRandom}
    \begin{algorithmic}[1]
    \Ensure Stochastic game $\GG$ where the initial state is connected to any $\state \in \states$
    \State Create $\states$ with a random $n \in \Naturals$
    \State Partition $\states$ uniformly at random into $\maxStates$ and $\minStates$
    \State Enumerate $\state \in \states$ in any random order from 0 to n-1
    \State Set $\initstate$ to the state with index 0
    \State Set $\targets = \{s_{n-1}\}$
    \For{$\state = 1 \rightarrow n-1$} \Comment{Forward Procedure}
        \If{$\state$ does have an incoming transition} 
            Continue (Skip iteration)
        \Else
            \State Pick any state $\state'$ with index smaller than $\state$
            \State Create an action $\action$ that starts at $\state'$ 
            \State Assign to ($\state'$, $\action$) a positive probability of reaching $\state$
            \State Create a valid probability distribution for ($\state'$, $\action$) by applying FillAction($\state'$, $\action$)
            \State Add $\action$ to $\Av(\state')$
            \State Add $\action$ to $\actions$
        \EndIf
    \EndFor
    \For{$\state = n - 1 \rightarrow 0$} \Comment{Backward Procedure}
        \State Pick a random number $m \in \Naturals$ \Comment{Add as many actions as possible}
        \If{$|\Av(\state)| = 0$} $m \gets \max{\{m, 1\}}$ \Comment{Every state must have at least one action} \EndIf 
        \For{$i = 1 \rightarrow m$}
            \State $(\state, \action_i)$ = FillAction($\state$, $\action_i$)
            \State Add $\action_i$ to $\Av(\state)$
            \State Add $\action$ to $\actions$
        \EndFor
    \EndFor
    \end{algorithmic}
\end{algorithm}

\begin{algorithm}[ht]
    \caption{FillAction($\state$, $\action$)}
    \label{alg:FillActions}
    \begin{algorithmic}[1]
        \Require outgoing state $\state$, action $\action$
        \Ensure action $\action$ that has a valid underlying transition probability distribution
        \Repeat
            \State Pick a random state $\state'$ where $\trans(\state, \action, \state') = 0$
            \State Increase $\trans(\state, \action, \state')$ by a random number $\in (0, 1]$
            \Until{either $\sum_{\state \in \states} \trans(\state, \action, \state') \geq 1$ or $\forall \state \in \states: \trans(\state, \action, \state') > 0$}
        \If{$\sum_{\state \in \states} \trans(\state, \action, \state') > 1$}
            \State Decrease the most recently modified $\trans(\state, \action, \state')$ so $\sum_{\state \in \states} \trans(\state, \action, \state') = 1$
        \ElsIf {$\sum_{\state \in \states} \trans(\state, \action, \state') < 1$} \Comment{Loop terminated because $\forall \state \in \states: \trans(\state, \action, \state') > 0$}
            \State Increase the most recently modified $\trans(\state, \action, \state')$ so $\sum_{\state \in \states} \trans(\state, \action, \state') = 1$
        \EndIf
    \Return ($\state$, $\action$)
    \end{algorithmic}
\end{algorithm}

\begin{lemma}
    Algorithm \ref{alg:randomRandom} creates formally correct stochastic games.
\end{lemma}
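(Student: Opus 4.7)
The plan is to check each component of the tuple $\langle \states, \maxStates, \minStates, \initstate, \act, \Av, \trans \rangle$ returned by Algorithm~\ref{alg:randomRandom} against Definition~1, together with the global non-blocking assumption $\Av(s) \neq \emptyset$. The state set $\states = [n]$ is finite by construction; its partition into $\maxStates$ and $\minStates$ is valid because every state is assigned to exactly one class; the initial state $\initstate$ is unambiguously the index-$0$ state; and $\act$ is finite because the two procedures only ever add finitely many actions. So the verification reduces to two non-trivial claims: (i) $\Av(\state) \neq \emptyset$ for every $\state \in \states$, and (ii) every action $\action \in \act$ added during execution satisfies $\trans(\state, \action, \cdot) \in \Dist(\states)$.

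For (i), I would appeal directly to the backward procedure: the guard $m \gets \max\{m, 1\}$ triggered exactly when $|\Av(\state)| = 0$ ensures that, by the time a state has been processed in this phase, it has at least one outgoing action. States that already received an action in the forward procedure satisfy the condition automatically, so the backward procedure provides the uniform guarantee.

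For (ii), the main work is to analyze \emph{FillAction} (Algorithm~\ref{alg:FillActions}). First, I would argue that its loop terminates in at most $|\states|$ iterations, since each iteration picks some $\state'$ with $\trans(\state, \action, \state') = 0$ and makes it strictly positive, decreasing the number of unvisited successors by one. On exit, exactly one of three situations holds: the running sum equals $1$ (no correction needed); it exceeds $1$ (the overshoot branch subtracts the excess from the most recently added value); or it is still strictly less than $1$ while every state has already been assigned positive mass (the undershoot branch increases the most recently added value up to $1$). In the overshoot branch the correction amount is strictly smaller than the sample $r \in (0,1]$ that was just added, because the sum before that iteration was still $< 1$; hence the corrected entry stays strictly positive. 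In the undershoot branch, adding a non-negative quantity to an already positive entry preserves positivity. In all three cases the resulting row $\trans(\state, \action, \cdot)$ is non-negative and sums to exactly $1$, so it lies in $\Dist(\states)$.

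The main obstacle I anticipate is the careful bookkeeping around the two corrective branches of \emph{FillAction}, and in particular showing that the overshoot correction cannot drive a probability to zero or below; this follows from the simple invariant that the overshoot is bounded above by the just-added sample from $(0,1]$. With that observation in hand, the rest is routine: the forward procedure invokes \emph{FillAction} only after planting a single positive probability toward the target successor (so \emph{FillAction}'s implicit precondition $\sum_{\state'} \trans(\state, \action, \state') < 1$ is met), and the backward procedure invokes it on freshly created actions whose probabilities are all zero. Assembling these pieces yields that the tuple produced by Algorithm~\ref{alg:randomRandom} meets all clauses of Definition~1 together with non-blocking, proving the lemma.
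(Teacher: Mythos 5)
Your proposal is correct and follows essentially the same route as the paper's own proof: check each component of the tuple directly, then concentrate the real work on showing that \emph{FillAction} always produces a valid probability distribution via the case split on exact sum, overshoot, and undershoot, with the key observation that the overshoot correction is bounded by the just-added sample. You additionally make explicit two points the paper leaves implicit (termination of the \emph{FillAction} loop and the non-blocking condition $\Av(s)\neq\emptyset$ via the $m \gets \max\{m,1\}$ guard), which only strengthens the argument.
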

\begin{proof}
    $\states$ is finite since its size is determined by a random number $n \in \Naturals$ (Line 1).
    Line 2 ensures that we have a partition of $\states$ into $\maxStates$ and $\minStates$.
    Next, Lines 3-5 provide an initial state and a target.
    Thus, we only need to argue that $\Av$ is truly a mapping of
    $\states \rightarrow 2^{\actions}$ and that the transition function yields a probability distribution.
    When we introduce state-action pairs (Lines 9-14 and Lines 18-21), we introduce a new action that we add to $\actions$. 
    Also, we add the state-action pair to $\Av$ (Lines 12 and 19).
    Thus, any $\Av$ is function of $\states \rightarrow 2^{\actions}$.

    Next we need to prove that for every $\state \in \states$ and every action $\action \in \Av(\state)$ the transition function $\trans(\state, \action)$ yields a probability distribution.
    In other words we need to validate that 
    (i) for every state $\state' \in \states: \trans(\state, \action, \state') \in [0, 1]$ and
    (ii) $\sum_{\state' \in \states} \trans(\state, \action, \state') = 1$ are true. 

    To prove (i), note that whenever we introduce an action $\action$ to the set of enabled actions $\Av(\state)$ of a state $\state \in \states$, 
    we have $\trans(\state, \action, \state') = 0$ for all $\state' \in \states$.
    We increase $\trans(\state, \action, \state')$ in Line 11 of Algorithm \ref{alg:randomRandom} and Lines 3 and 8 of Algorithm \ref{alg:FillActions}.
    We increase transition probabilities by numbers in $[0,1]$.
    Due to the condition in Line 4 of Algorithm \ref{alg:FillActions}, $\trans(\state, \action, \state')$ can only be increased a second time
    in Line 8 of Algorithm \ref{alg:FillActions}. However, per state-action pair $(\state, \action$) with $\action \in \Av(\state)$ Line 8 of Algortihm \ref{alg:FillActions}
    may be executed only once and we increase only by a $\Delta > 0$ such that 
    $\trans(\state, \action, \state') + \Delta + \sum_{\stateMac'' \in \states \setminus \{\state'\}} \trans(\state, \action, \stateMac'') = 1$.
    Since every state $\stateMac'' \in \states \setminus \{\state'\}$ was increased only once, $\trans(\state, \action, \state') + \Delta \leq 1$.

    To prove (ii), note that every pair $(\state, \action)$ where $\action \in \Av(\state)$ is given to Algorithm \ref{alg:FillActions}.
    Once the loop (Lines 1-4) terminates, it either holds that (a) for every state $\state' \in \states: \trans(\state, \action, \state') > 0$ or that 
    (b) $\sum_{\state' \in \states} \trans(\state, \action, \state') \geq 1$.
    If $\sum_{\state' \in \states} \trans(\state, \action, \state') = 1$ the algorithm terminates correctly.
    Hence, we consider the other two cases:
    In case (a) Line 8 increases the most recently added triple $(\state, \action, \state')$ by a $\Delta > 0$
    such that $\trans(\state, \action, \state') + \Delta + \sum_{\stateMac'' \in \states \setminus \{\state'\}} \trans(\state, \action, \stateMac'') = 1$.
    In case (b) $\sum_{\state' \in \states} \trans(\state, \action, \state') > 1$. 
    Due to the exit condition of the loop (Line 4), without the most recently increased transition $(\state, \action, \state')$ the sum of the probabilities must be
    below 1, i.e. $\sum_{\stateMac'' \in \states \setminus \{\state'\}} \trans(\state, \action, \stateMac'') < 1$. Thus, there is a $\Delta \in (0, 1), \Delta < \trans(\state, \action, \stateMac'')$ such that
    $\trans(\state, \action, \state') - \Delta + \sum_{\stateMac'' \in \states \setminus \{\state'\}} \trans(\state, \action, \stateMac'') = 1$.
    We decrease $(\state, \action, \state')$ by this $\Delta$.
    In conclusion, every pair $(\state, \action)$ that is provided to Algorithm \ref{alg:FillActions} yields a valid transition distribution where 
    $\sum_{\state' \in \states} \trans(\state, \action, \state') = 1$.
    Thus, Algorithm \ref{alg:randomRandom} generates formally correct stochastic games.
\end{proof}

To argue about the SGs that the algorithm produces, we introduce some notation.
Let $\randomRandomOutcome$ be  the set of SGs that Algorithm \ref{alg:randomRandom} can produce and $\connectedSG$ be the set of all SGs where every state is reachable from the initial state.
Note that every SG that is not an element of $\connectedSG$ can be transformed into an SG with the same value and optimal strategies by removing all unreachable states. Hence, we only care about producing SGs in $\connectedSG$.

\begin{lemma}
Algorithm~\ref{alg:randomRandom} creates all SGs where every state is reachable from the initial state, i.e.\ $\randomRandomOutcome = \connectedSG$.
\end{lemma}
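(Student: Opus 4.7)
The plan is to prove the two set inclusions separately. For $\randomRandomOutcome \subseteq \connectedSG$, I would use induction on the forward-procedure index. In any SG produced by Algorithm~\ref{alg:randomRandom}, $\initstate$ is reachable from itself; for $i \geq 1$, either $s_i$ already has an incoming transition from some action introduced in an earlier iteration at a state $s_j$ with $j<i$, so $s_j$ is reachable from $\initstate$ by induction and the positive-probability transition carries reachability to $s_i$, or the forward loop (Lines 9--14) explicitly creates such a transition. The backward procedure only adds new actions, so reachability is preserved. Hence every state is reachable from $\initstate$.

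For the harder inclusion $\connectedSG \subseteq \randomRandomOutcome$, I would fix an arbitrary $\GG^{*} = \langle \states^{*}, \maxStates^{*}, \minStates^{*}, \initstate^{*}, \act^{*}, \Av^{*}, \trans^{*}\rangle \in \connectedSG$. Since every state is reachable from $\initstate^{*}$, there exists a spanning arborescence rooted at $\initstate^{*}$: a choice, for each non-root state $s$, of a parent $p(s)$ and an action $a(s) \in \Av^{*}(p(s))$ with $s \in \post^{*}(p(s), a(s))$. Let $s_0, \dots, s_{n-1}$ be an enumeration of $\states^{*}$ with $s_0 = \initstate^{*}$ consistent with this arborescence, so each $s_i$ follows $p(s_i)$. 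I then exhibit random choices under which the algorithm outputs $\GG^{*}$ (up to relabelling of actions): take $n = |\states^{*}|$ in Line~1, reproduce the Maximizer/Minimizer partition in Line~2, and realise this ordering in Line~3.

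In the forward procedure at index $i \geq 1$, if $s_i$ has not yet been reached by a previously introduced action, pick $s' = p(s_i)$, create a fresh action label representing $a(s_i)$, and drive \textsc{FillAction} (Algorithm~\ref{alg:FillActions}) to enumerate the successors in $\post^{*}(p(s_i), a(s_i))$ with each increment equal to the corresponding value of $\trans^{*}$; this is a legal run of the subroutine because each probability lies in $(0,1]$ and the values sum to $1$. If $s_i$ is already reached, the iteration is skipped, as prescribed. Let $F \subseteq \act^{*}$ be the set of actions introduced so far. In the backward procedure at each state $s$, pick $m = |\Av^{*}(s)| - |F \cap \Av^{*}(s)|$ and similarly guide \textsc{FillAction} to match each remaining action of $\Av^{*}(s) \setminus F$ one at a time. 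Since every individual random choice above occurs with positive density, $\GG^{*}$ lies in the support $\randomRandomOutcome$.

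The main obstacle is the bookkeeping around which tree-actions get created in the forward phase: one must verify that the skip branch (Lines 7--8) fires exactly at those non-root states $s_i$ for which some earlier tree-action $a(s_j)$ with $j<i$ already has $s_i$ as a $\post^{*}$-successor, so that no spurious action is created and every action of $\GG^{*}$ is accounted for by $F$ together with the backward-phase count $m$ at each state. Once the arborescence and ordering are fixed, this reduces to a purely combinatorial matching argument; the continuous randomness serves only to attain the exact numerical transition probabilities of $\GG^{*}$, which is possible because any finite positive probability vector summing to $1$ has positive density under the \textsc{FillAction} increments.
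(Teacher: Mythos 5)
Your proposal is correct and follows essentially the same route as the paper: induction along the forward-procedure enumeration for $\randomRandomOutcome \subseteq \connectedSG$, and for the converse an explicit witnessing run built from a spanning structure rooted at the initial state (your arborescence plays the role of the paper's minimal connecting set with its BFS enumeration), with \textsc{FillAction} guided to reproduce each exact transition distribution and the backward phase supplying the remaining actions. Your bookkeeping of which actions are created in the forward versus backward phase is, if anything, slightly more explicit than the paper's.
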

\begin{proof}

We first show that $\randomRandomOutcome \subseteq \connectedSG$:

For this statement to hold, any $\SG \in \randomRandomOutcome$ must be connected from the initial state.
Proof by induction over the indices $i$ of the states along their enumeration assigned during Algorithm $\ref{alg:randomRandom}$:

$\textbf{Basis}$: $i=0$:
$\initstate$ is the initial state. The initial state can reach itself within 0 steps.

$\textbf{Hypothesis}$:
Let $i$ be arbitrary but fixed with $i \leq n-1$, where $n = |\states|$. For every $j \leq i$ it holds that $\initstate$ can reach $\state_j$.

$\textbf{Inductive Step}$: $i \gets i+1$:
Due to the forward procedure it holds that 
\[
    \exists \state_j \in \states, j < i, \action \in \Av(\state_j): \trans(\state_j, \action, \state_i) > 0
\]
However, according to our hypothesis $\initstate$ is connected to $\state_j$ and thus also to $\state_i$.

Now we show that $\connectedSG \subseteq \randomRandomOutcome$:

Pick an arbitrary but fixed stochastic game $\GG \in \connectedSG$.
Next, we show that there is a run of our algorithm that will return a stochastic game $\GG' \in \randomRandomOutcome$ where $\GG'$ is an automorphism to $\GG$.
Thus, $\GG'$ and $\GG$ are the same except for the state enumeration and the labels of the actions.

For this, we need several statements to hold at once:
\begin{enumerate}
    \item The number of states in $\GG$ and $\GG'$ is equal.
    \item The partition of $\states$ to $\maxStates$ and $\minStates$ is the same for $\GG$ and $\GG'$.
    \item $\GG$ and $\GG'$ have the same initial states and targets.
    \item All state-action pairs in $\GG$ and $\GG'$ yield the same probability distributions in $\trans$.
    \item Every state in $\GG$ and $\GG'$ has the same actions.
\end{enumerate}

We decide randomly on the number of states in Line 1 of Algorithm \ref{alg:randomRandom} and partition $\states$ into $\maxStates$ and $\minStates$ at random
in Line 2 of Algorithm \ref{alg:randomRandom}. Thus, for every $\GG \in \connectedSG$ there exists $\GG' \in \randomRandomOutcome$ that have the same number of states to which
there is an enumeration such that they are partitioned equally in both stochastic games.
Since the states can be arranged in any order, we pick the initial state of $\GG'$ such that is the same as in $\GG$.
All targets of $\GG$ can be mapped to the singular target of $\GG'$, since they only have self-loops and behave identically to $\target$ of $\GG'$.
Thus, there exists a run where Statements 1, 2, and 3 hold. 

When using Algorithm \ref{alg:FillActions} to create a probability distribution for a state-action pair $(\state, \action)$, 
we increase transition probabilities until they sum up to 1. Thus, any summation $\sum_{\state' \in \states} \trans(\state, \action, \state') = 1$ is possible.
In consequence, an action may lead into arbitrary states, have an arbitrary number of positive transition probabilities between 1 and $|\states|$, and may have arbitrary
probability distributions on the transitions as long as they sum up to 1. So out of all runs where Statements 1, 2, and 3 hold, there also must be at least one run where statement 4 holds too.

To show that Statement 5 holds in addition, note that each $\GG \in \connectedSG$ has a minimal set of state-action tuples such that the initial state is connected to every state.
Taking this set, we can perform a breadth-first search from the initial state to provide an enumeration of the states.
If we iterate over the states along this enumeration, we can reproduce each of the actions in the minimal set during the forward process.
Due to the enumeration, to each state $\state$ except for the initial state, there is a state with a smaller index $\state'$ such that $\state'$ has an action $\action$ with a positive transition
probability of reaching $\state$. Since every other transition of ($\state'$, $\action$) can lead into arbitrary states and the probability distribution of ($\state'$, $\action$) can be arbitrary, 
we can recreate the minimal set of state-action tuples in $\GG'$.

The remaining state-action pairs of $\GG$ can be added to $\GG'$ during the backward process, where every state may add arbitrarily many actions with arbitrary transition distributions.
\end{proof}

Note that although $\randomRandomOutcome = \connectedSG$, in general Algorithm \ref{alg:randomRandom} does not sample $\connectedSG$ uniformly at random.
Due to the forward procedure, states with smaller indices tend to have more actions than the states with higher ones.
Additionally, creating the transition distributions as described in Algorithm \ref{alg:FillActions} favors state-action pairs to have few transitions.
If we pick the transition probabilities between $(0, 1]$ uniformly at random, around $83,33\%$ of all actions have two or three transitions with positive probability and 
none with one transition.

\subsection{Parameters and guidelines for model construction} \label{sec:guidelines}
We implemented a constrained version of Algorithm~\ref{alg:randomRandom} from Appendix~\ref{sec:randomGenAlgo} to randomly generate models.
The real-world implementation has to be constrained due to the natural restriction that a computer cannot generate arbitrarily large stochastic games and
arbitrarily small transitions due to finite memory. Additional constraints on the real-world implementation are the pseudo-randomness while taking decisions, as well as
floating point machine precision. 
Moreover, we want to give the users some control over the properties of the resulting models like the number of states or the partitioning into Minimizer and Maximizer states.
If all model properties vary significantly, it is very hard to deduce why an algorithm performs differently on two models.
Thus, we provide several parameters which can be set to restrict the randomization. Some examples of the parameters that one can control are the number of states, number of models, smallest probability that is allowed to occur, number of transitions of an action. 

\subsection*{Limits and additional guideline options} \label{app:random:guidelinesSubsec}
Although the parameters we expose for random generation can directly influence various structural properties of the resulting models,
there are other structural properties that are hard to influence. For example, there is no direct way to affect the size and number of SCCs of a model, or to guarantee
that every state in a model has a certain number of actions when using Algorithm \ref{alg:randomRandom}. We provide some guidelines that can be followed to influence such properties.

\begin{enumerate}
	\item \textbf{RandomTree guideline.} We refer to the guideline that controls how many actions a state has as \emph{RandomTree}
because it creates a \emph{tree-like} graph structure where the initial state is the root.
Every node of the tree has $k$ actions and at most $k$ children, where $k$ is a parameter.
Every action has an assigned child to which it has a positive transition probability.
The rest of the probability distribution of the action is assigned at random.
An inner node of the tree may have less than $k$ children if adding $k$ children would exceed the requested number of states $n$ for the model.
Also, leaves are not required to have $k$ actions. Their actions are only introduced during the backward process and are there to enable the generation of end components.

	\item \textbf{RandomSCC guideline} This is a guideline that controls the SCCs of a model.
The procedure requires a minimal and maximal size boundary $[a, b]$ for every SCC and the total number of states in the stochastic game $n$.
First, we create subgames of a randomly chosen size in $[a,b]$.
The subgames are created by the algorithm described in Appendix~\ref{sec:randomGenAlgo}.
We then use Tarjan's algorithm for strongly connected components~\cite{TarjansAlgorithm} to identify the SCCs of the created subgame.
Next, we unify all the SCCs of the subgame by using the topological enumeration Tarjan's algorithm provides.
We circularly connect the SCCs along the enumeration, making the whole subgame an SCC.
Next, we make sure that the subgame is connected to the rest of the stochastic game by making sure a previously created subgame has an action
leading into this subgame. We repeat this procedure until we have at least $n$ states in the stochastic game,
resulting in a stochastic game $\GG$ that is connected from the root where the user has an easy way of controlling the number and size of the SCC in $\GG$.
\end{enumerate}
\section{Details on model analysis} \label{app:modelAnalysis}
In this section, we discuss the feature distribution of the real and randomly generated models.
We claim that such a feature analysis of benchmark sets is important in order to find out biases, as well as judge why certain sets of benchmarks are hard.
We first give a few noteworthy conclusions of our analysis.
\begin{itemize}
	\item Often large parts of the real case studies are solved by pre-computations. Hence, instead of the size of the state space $\abs{\states}$ we should consider the size of the unknown part $\abs{\states^?}$.
	\item In the real case studies we have, there are usually only few actions and successors per state, few MECs and few non-trivial SCCs. Thus, to assess general algorithm performance, we should generate models with these features, be it by hand, randomly or by finding realistic problems with this structure.
	\item In general, the models generated by our algorithm without using the guidelines are hard because they form very connected models with many actions and transitions as well as low transition probabilities. Apart from that, the models generated do not exhibit large size, many MECs or long chains of SCCs.
\end{itemize}

We provide two figures, each analysing 16 features of a set of case studies.
Figure~\ref{fig:Real_FeatureDistribution} shows the feature distribution of the real case studies. We refer to this set as \realcs.
Figure~\ref{fig:10act_FeatureDistribution} shows the feature distribution of a randomly generated set of models with many actions. We refer to this set as \many.
For our analysis we use box plots, where for every feature distribution, the plot shows the median (orange line), average (green triangle),
25 and 75 percentile (box), 10 and 90 percentile (whiskers) and outliers (circles).
In every figure, the box plots are grouped by and coloured according to the following categories:
\begin{itemize}
    \item Green outlines are for features related to states.
    \item Blue outlines are for features related to actions.
    \item Cyan outlines are for features related to transitions.
    \item Red outlines are for features related to MECs.
    \item Orange outlines are for features related to SCCs.
\end{itemize}

\begin{figure}[t]
	\centering
	\includegraphics[width=1\textwidth]{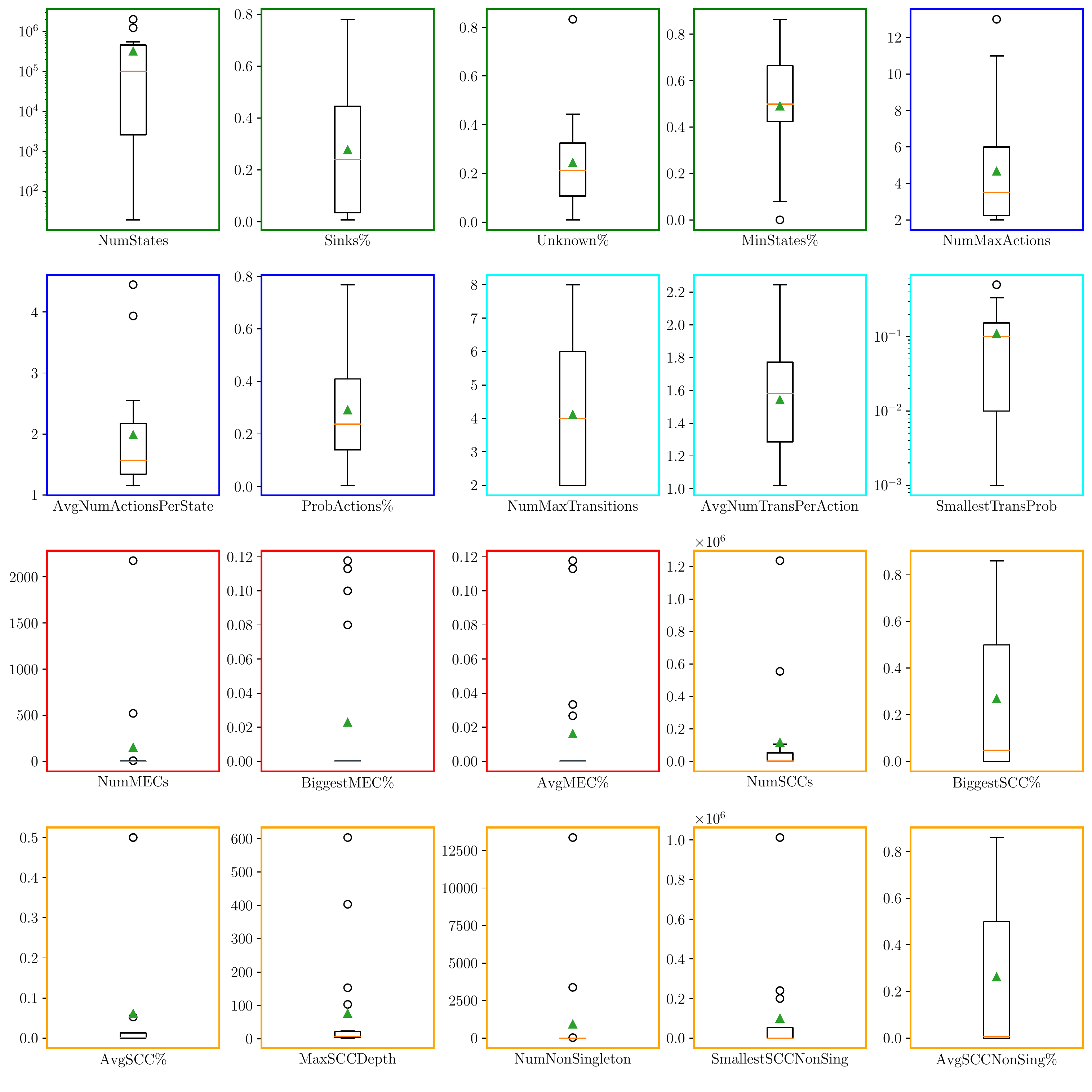}
	\caption[Feature Distribution of the case studies]{
		\realcs: Box plots for analysing the features of the \emph{real case studies}.
		A description of how to read box plots is provided in Section~\ref{sec:randomGen}.
	}
	\label{fig:Real_FeatureDistribution}
\end{figure}

\begin{figure}[t]
	\centering
	\includegraphics[width=1\textwidth]{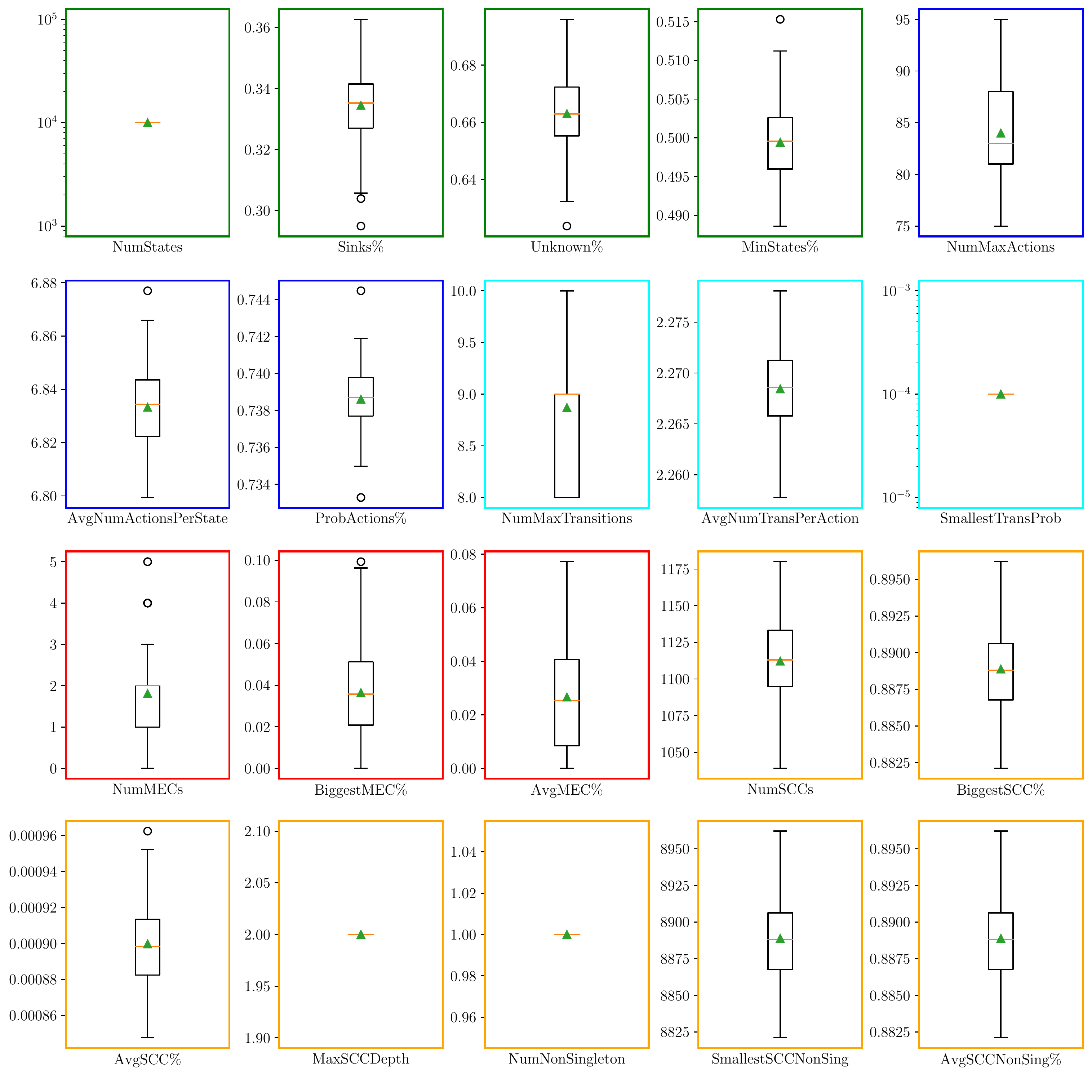}
	\caption[Feature Distribution of the case studies]{
		\many: Box plots for analysing the features of the set of \emph{randomly generated models}.
		A description of how to read box plots is provided in Section~\ref{sec:randomGen}.
	}
	\label{fig:10act_FeatureDistribution}
\end{figure}

For every feature we point out differences and similarities between the sets.
We also motivate why certain features are interesting and our choices of parameters for the set \many.
\begin{itemize}
	\item NumStates: Number of states in a model, given in log scale. 
	For \realcs, this has a spread between several dozen up to a few million states. 
	For \many, we always chose 10,000, so that the complexity of the model could not come from its size, but the size was still non-trivial.
	To analyze the complexity induced by the sheer size of a model, we used the additional handcrafted scalable model described in Appendix~\ref{app:exp:pt}.
	\item Sinks\% and Unknown\%: These features describe the percentage of states that are sinks ($\sinks$) or unknown states ($\states^?$). 
	Note that all randomly generated models have exactly one goal state.
	Hence, we have that Sinks\% + Unknown\% sums up to (a little less than) 1.
	For \realcs, the percentage of unknown states is almost always less than 40\% and less than 20\% in half the cases. This means that most of the model is solved by pre-computations, being either sinks or goal states.
	We highlight one implication of this: an experimental analysis relating the number of states to the performance of algorithms is flawed, since large parts of the state space are quickly solved by pre-computations shared by all algorithms. Thus, such an analysis should rather consider the number of unknown states.
	In \many, the percentage of unknown states is around 60\%, i.e.,\ typically more of the model requires an actual computation of the solution algorithm.
	\item MinStates\%: Percentage of states belonging to Minimizer. Note that the percentage of Maximizer states is 1 minus this. For all sets, the average is around 50\%. For \realcs, the variance is higher. 
	\item NumMaxActions, AvgNumActionsPerState, ProbActions\%, NumMaxTransitions, AvgNumTransPerAction:
	These features intuitively describe the ``breadth'' or ``branching'' of the model. They are, in order: the maximum number of actions per state occurring in the model, the average number of actions per state, the percentage of probabilistic actions (with more than one successor state), the maximum number of transitions occurring for a state-action pair, and the average number of transitions occurring for a state-action pair.
	
	For \realcs, we typically have slightly less than 2 actions per state, usually no more than 5 and never more than 13. Only a third of the actions are probabilistic. Typically, we have between one and two successors, seldom more than 4 and never more than 8.
	
	Since we wanted to explore graph structures that are not present in \realcs, for the generation of \many\ we allowed our algorithm to create models with more actions and transitions and higher branching. We typically have around 7 actions per state, but also up to 95. Three quarters of the actions are probabilistic.
	We have an average of slightly more than 2 transitions per action and a maximum of 10. The resulting models are more connected and thus often harder to solve for the algorithms.
	
	\item SmallestTransProb: The smallest transition probability occurring in the model.
	In \realcs, this goes from 0.5 (the largest possible non-trivial probability) to $10^{-3}$. 
	In \many, we always set this to $10^{-4}$, making the models generally hard for value iteration, so that non-trivial solution times occur.
	
	\item NumMECs, BiggestMEC\%, AvgMEC\%, NumSCCs, BiggestSCC\%, AvgSCC\%: These features are related to the important graph theoretic concepts of MEC and SCC. We give their number (Num), the size of the biggest occurring MEC/SCC in percentage of the state space (Biggest\%) as well as the average size of MECs/SCCs. Note that we only count MECs in the unknown part of the state space.
	
	Most of the models in \realcs\ have very few and very small MECs, with a few exceptions going up to around 2000 MECs. 
	Since we often only have one MEC, the biggest and average MEC size box plot look very similar. 
	In contrast, in \many\ we usually have around 2 non-trivial MECs, but they also are small. To analyze the impact of many or big MECs, we used the handcrafted model MulMec. However, we can also offer guidelines to create random models with certain numbers and sizes of MECs, analogous to the RandomSCC guideline.
	
	For SCCs, note that the scale of the plot for the number of SCCs is multiplied by one million. In \realcs\ we usually have around 100,000 SCCs, with an outlier having more than a million. The biggest SCC is often large, comprising a third of the model on average and going up to 50\% in three quarters of the cases. Since there are many transient states (which form a singleton SCC of size 1), the average SCC size is small. Since this does not give a lot of information, we also give the average size of non-singleton SCCs, see below.
	In \many, the variance is smaller. There are around 1,000 SCCs, and the biggest SCC makes up around 90\% of the model, making it almost completely strongly connected. This is because our random generation picks successor states randomly, and chances are good that this induces cycles throughout most of the model. This is why we offer the RandomSCC guideline, which allows to create models that less connected.

	\item MaxSCCDepth: This is important for topological algorithms, as it is the depth of the DAG forming the graph. Note that Example~\ref{ex:tvi} only needed a chain of 20 SCCs to show the problems of topological value iteration.
	In \realcs, the average depth is around 100 and it can go up to 600. In \many, this depth is low, because long chains of SCCs are unlikely to occur with the random generation. We analyze these chains using the handcrafted model from Appendix~\ref{app:exp:pt}. Alternatively, we could use the RandomSCC guideline.
	
	\item NumNonSingleton, SmallestSCCNonSing, AvgSCCNonSing\%: Since the many singleton (i.e.,\ trivial one state) SCCs make the features relating to all SCCs hard to interpret, these features analyze only the non-trivial SCCs. 
	In \realcs, in three quarters of the models there are very few non-trivial SCCs. 
	From the SmallestSCCNonSing (again with the axis multiplied with a million) and AvgSCCNonSing\% plot (which looks very similar to that for biggest overall SCC) we can see that in half the models even non-trivial SCCs are small. However, there are models where even the smallest SCC is big, indicating that there is one big SCC with chains of transient states around it.
	This is also the structure of the models from \many, where the singleton smallest and average size plot are the same as the overall biggest SCC plot.
\end{itemize}

\FloatBarrier
\section{Additional details on the experimental evaluation}\label{app:exp}

\subsection{Details on the optimizations}\label{app:exp:opts}

In this appendix, we describe the results of our evaluation of the different optimizations.
In principle, we can enhance every VI algorithm with any combination of them (with the exception that combining T and PT is the same as just having PT).

\begin{itemize}
	\item G: The Gau{\ss}-Seidel variant of VI does not perform the Bellman update on all states at once, but rather proceeds state by state. This allows to immediately use the new estimates of the states that were updated before and can thus speed up the computation.
	\item D: Deflating is a costly operation, since computing the SECs always requires a MEC decomposition. While this is possible in polynomial time~\cite{CY95}, it is slow in practice. 
	Thus, delaying deflating and only applying it only every $n$ steps can speed up the computation. However, for high $n$, the algorithm can also waste time waiting for the next deflation to happen. 
	Deflating every 100 steps was observed to be a good compromise in~\cite{KKKW18}.
	However, for OVI, we omitted this optimization, as there the whole point of the verification phase is to decrease the upper bound, and there is not point in waiting (for e.g.,\ the lower bound to converge).
	\item T: The topological variant of VI~\cite{TVI1} first computes a partitioning of the state space into SCCs.
	Instead of solving the whole SG at once, topological VI proceeds SCC by SCC, starting from those at the bottom of the topological ordering. This allows for using less memory at once and can speed up the computation.
	\item PT: The precise variant of topological VI which we introduced in Section~\ref{sec:pt}. For a description of how to read scatter plots, see Section~\ref{sec:exp:precise-details}.
\end{itemize}

For assessing the impact of the G, D and T optimizations on the algorithms' performance, we compare the optimized versions of BVI, OVI, and $\SI$ to their unoptimized ones, as shown in the scatter plots (Figures~\ref{fig:scatterG}, \ref{fig:scatterD} and \ref{fig:scatterT}).
PT is analyzed in Section~\ref{sec:exp:precise-details}.

\subsubsection{G optimization for BVI and OVI}:
Figure~\ref{fig:scatterG} indicates that using the Gau{\ss}-Seidel optimization may reduce both the time required to solve a model by 1-4 times in most cases.
The Gau{\ss}-Seidel optimization is slower in some cases because the values are computed sequentially to enable the use of the already computed results.
Also, the unoptimized version uses vector operations instead, which turn out to be faster sometimes.
Furthermore, it is possible that there are more iterations required to solve. 
This is because OVI and BVI may find different ECs to deflate depending on whether Gau{\ss}-Seidel is used or not.
In some cases, the unoptimized version is able to find more favourable sets of ECs and thus requires less iterations.
Note that the order in which the states are updated by Gau{\ss}-Seidel VI is random. However, changing the order of computation of the states to be a reverse topological enumeration of the states did not yield improvements in our experiments.  

\begin{figure}
        \centering
       	\includegraphics[width=0.5\textwidth]{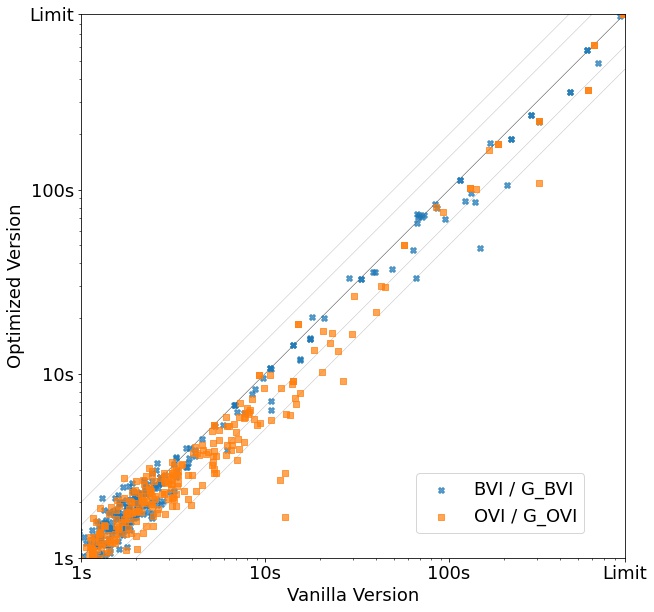}
        \caption{Scatter plot with logarithmic scale, comparing verification times of vanilla (unoptimized) BVI, OVI and with their Gau{\ss}-Seidel variant. Below the diagonal means with G is better, above means vanilla is better.}
        \label{fig:scatterG}
\end{figure}
\FloatBarrier

\subsubsection{D optimization for BVI}:

Figure~\ref{fig:scatterD} clearly indicates that although DBVI may sometimes solve models faster, for the majority of our models it could not compete with BVI.
\begin{figure}
        \centering
       	\includegraphics[width=0.5\textwidth]{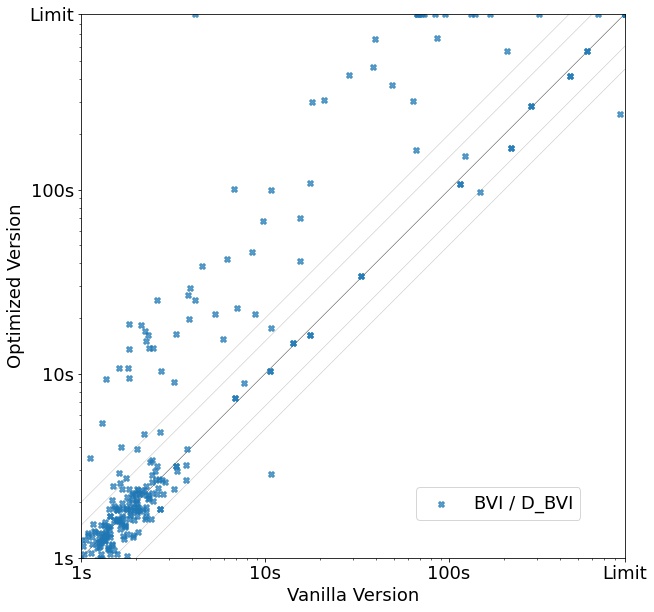}
        \caption{Scatter plot with logarithmic scale, comparing verification times of vanilla (unoptimized) BVI, OVI and $\LPSI$ with their deflating variant. Below the diagonal means with D is better, above means vanilla is better.}
        \label{fig:scatterD}
\end{figure}
\FloatBarrier

\subsubsection{T optimization for BVI, OVI and $\LPSI$}:

As the scatter plot in Figure~\ref{fig:scatterT} shows, on both types of models,
the topological addition to SI neither increases nor decreases its performance considerably.
However, most models have very few SCCs, so the topological optimization does not contribute a lot.
The data point where $\SI$ is significantly faster is on the real case study "dice", where every state is an SCC on its own.
Obviously, this is the best-case scenario for topological algorithms.

Adding the topological optimization to BVI often makes it significantly worse, mainly because of the precision issues addressed, when introducing PTBVI in Section~\ref{sec:pt}.
For OVI, the picture is slightly more mixed, but using the topological variant often helps.

\begin{figure}
        \centering
       	\includegraphics[width=0.5\textwidth]{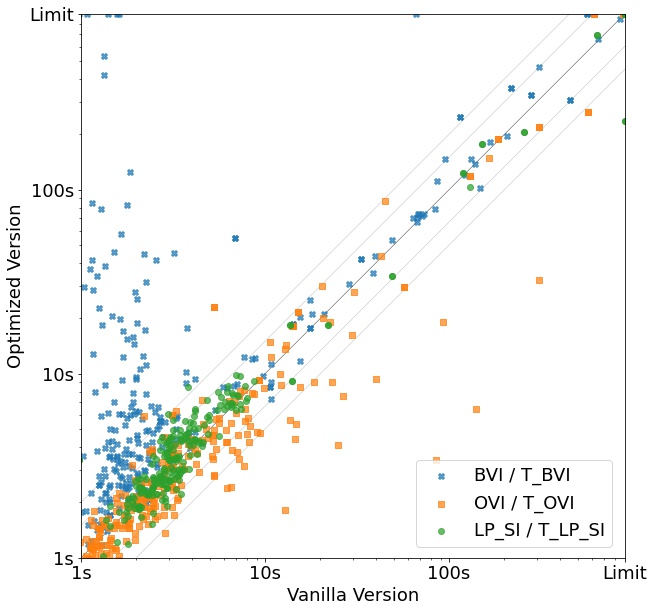}
        \caption{Scatter plot with logarithmic scale, comparing verification times of vanilla (unoptimized) BVI, OVI and $\SI$ with their topological variant. Below the diagonal means with T is better, above means vanilla is better.}
	\label{fig:scatterT}
\end{figure}
\FloatBarrier

\subsection{Additional scatter plot for BVI compared to WP and OVI}\label{app:exp:bvi}

Figure~\ref{fig:BVIvsWPvsOVI} shows the scatter plots comparing BVI with WP and OVI, complementing those in Figure~\ref{fig:OVIvsWPvsBVI}.
These two figures give all relations between the three algorithms.

\begin{figure}[h!]
	\centering
	\subfloat[
	Real and handcrafted models]{{\includegraphics[width=.45\textwidth]{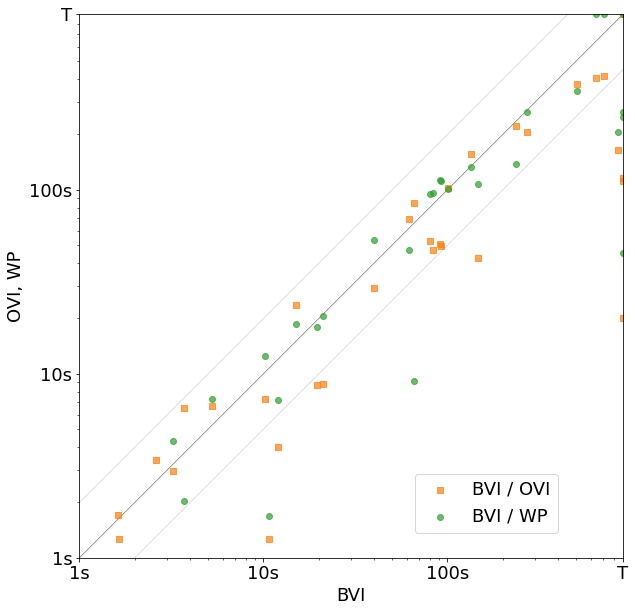} }}%
	\
	\subfloat[
	Random models]{{\includegraphics[width=.45\textwidth]{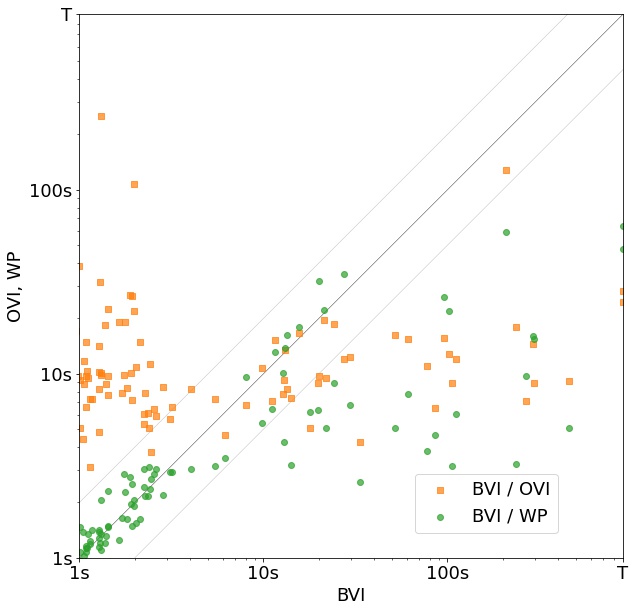} }}%
	\caption{BVI compared to WP and OVI.}%
	\label{fig:BVIvsWPvsOVI}%
\end{figure}
\FloatBarrier

\subsection{Additional details and plots for OVI}\label{app:exp:add}

\subsubsection{Handcrafted models for OVI.}
As an extreme example where the lower bound converges faster, consider a chain of Maximizer states, where every state has two actions: One that immediately yields a value of 0.5 and one that continues to the next state in the chain with low chance (0.01) and self-loops with a high chance (0.99). 
These kind of self loops always slow down VI algorithms, see also the haddad-monmege model~\cite{bvi}.
The last state in the chain has an action with value 0.49.
The lower bound of all states can immediately be set to 0.5, as all states have the first action that guarantees this value. However, to ensure that continuing in the chain is certainly the worse option, i.e.,\ to have a convergent upper bound, BVI has to wait for the information to propagate over the self-loops.
In contrast, OVI quickly knows the correct lower bound and then can verify it.
Concretely, BVI cannot solve the model within 2 minutes as soon as the chain has more than 200 states, while OVI can deal with more than 10,000 states, see Figure~\ref{fig:oviHandcrafted}.

As noted in Section~\ref{sec:exp:approx-details}, the dual situation where the lower bound converges slowly, is problematic for OVI. 
For example, if we remove the action with the value of 0.5 from the game above, we get a Markov chain where every state has a low chance
of progressing towards the target\footnote{Note that this is different from the completely adversarial example of\cite{bvi},
where every state does not only have self-loop, but even goes back to the initial state.}. On this type of chain, OVI is consistently around 4 times slower than BVI, as 
it is more time expensive to wait for the lower bound to converge than to converge from both sides.

\subsubsection{Number of verification phases}
Theorerically, in the worst case the number of iterations for the first verification phase is $\frac{1}{\varepsilon'}$ (which is 1,000,000 in our case). If the verification fails, $\varepsilon'$ is halved and number of iterations in the next verification phase doubles. Thus, it is possible that the verification phase is aborted, although the game is almost solved. OVI will iterate unnecessary extra steps until it reaches its new precision that verifies that OVI may indeed terminate.
For large models, it is more likely that a model is so complex that it requires multiple verification phases.
Thus, it is also more probable that the precision is halved at a point that will lead to unnecessary iterations.
In addition to that, iterations are more costly for larger models since the state space is large.

\begin{figure}[ht]
	\centering
	\includegraphics[width=1\textwidth]{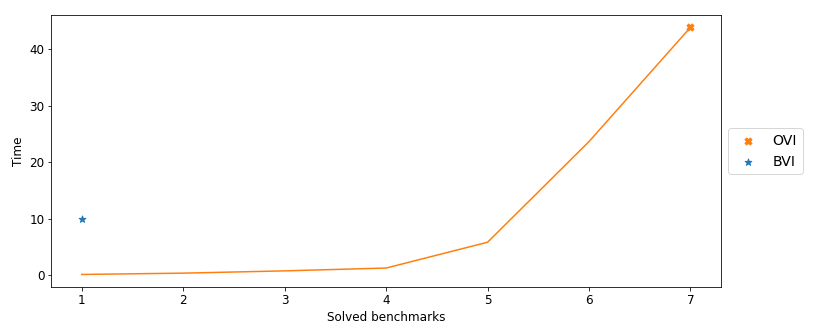}
	\caption{Measuring number of handcrafted OVI-model solved against aggregated runtime. The legend sorts the algorithms by their aggregated runtimes in descending order. The model is always the same, but with 100, 500, 900, 1000, 5000, 9000 and 10000 states. OVI is fast on this model, while BVI without optimizations cannot solve the model as soon as it has over 200 states. Topological optimizations handle the model very well, since every state is an SCC.}
	\label{fig:oviHandcrafted}
\end{figure}

\FloatBarrier

\subsubsection{Choice of parameters}

Two important parameters are the number of steps in a verification phase and the modification of the precision after a verification phase.

If a verification phase cannot verify the upper bound, we should abort it as soon as possible. However, it might be necessary to iterate for a long number of steps before the decrease of the upper bound propagates to all states, since it might be ``hidden'' behind some state switching its action.
Thus, we may also not choose it too small.

After a failed verification phase, we have to increase the lower bound enough that trying to guess the next upper bound has good chances of being correct. However, if the precision becomes too small, the VI-phase might have to run for a very long time before another verification phase is attempted.
On the other hand, keeping the precision too similar results in lots of aborted verification phases.

Finding a good trade-off, possibly even by dynamically changing these parameters during a run of the algorithm, is a task we leave for future work.

\subsection{Analysis of large models}\label{app:exp:pt}
To analyze how the algorithms scale on a large model with many SCCs, we handcrafted a model, called simple\_$n$\_$m$\_SCC.
It contains $n$ states and $m$ SCCs.
Every SCC forms a tree. The inner nodes have deterministic actions, leading to the next level. The leaves have probabilistic actions, leading to the root of the current or of the next tree, making the tree strongly connected.

Table~\ref{fig:tablePTBVI} shows the number of states and SCCs (visible in the name of the model) and the resulting verification times. 
We compared PTBVI with the fastest approximate algorithm, WP, and two variants of topological strategy iteration: 
one using linear programming for solving the opponent MDP ($\TLPSI$) and one using strategy iteration for solving the opponent MDP ($\TSISI$).

We see that PTBVI scales well on these models, while both variants of SI struggle time out on models with more than 2 million states.

\begin{table}[]
\centering
\begin{tabular}{l|rrrr}
	\toprule
	\textbf{Model name}       & \textbf{PTBVI}   & \textbf{$\TLPSI$}            & \textbf{$\TSISI$}           & \textbf{WP} \\ \midrule
simple\_50000\_1\_SCC                                     & 2.154                                  & \multicolumn{1}{r}{5.773}   & \multicolumn{1}{r}{7.079}  & 2.135                       \\ 
simple\_50000\_5\_SCC                                      & 4.564                                  & \multicolumn{1}{r}{4.999}   &                             & 10.548                      \\ 
simple\_100000\_1\_SCC                                    & 6.049                                  & \multicolumn{1}{r}{13.189}  & \multicolumn{1}{r}{14.694} & 4.33                        \\ 
simple\_100000\_5\_SCC                                    & 13.179                                 & \multicolumn{1}{r}{7.889}   & \multicolumn{1}{r}{12.927} & 25.34                       \\ 
simple\_500000\_1\_SCC                                    & 27.422                                 & \multicolumn{1}{r}{133.114} &                             & 22.127                      \\ 
simple\_500000\_5\_SCC                                  & 73.477                                 & \multicolumn{1}{r}{64.124}  &                             & 160.747                     \\ 
simple\_1000000\_1\_SCC                                 & 53.707                                 & \multicolumn{1}{r}{274.134} &                             & 53.319                      \\ 
simple\_1000000\_5\_SCC                                  & 156.327                                & \multicolumn{1}{r}{138.902} &                             & 401.318                     \\ 
simple\_2000000\_1\_SCC                                  & 103.149                                &                              &                             & 134.549                     \\ 
simple\_2000000\_5\_SCC                                  & 263.625                                & \multicolumn{1}{r}{366.236} &                             & 753.834                     \\ 
simple\_3000000\_1\_SCC                                & 165.242                                &                              &                             & 227.212                     \\ 
simple\_3000000\_5\_SCC                               & 389.154                                &                              &                             & 1244.764                    \\ 
simple\_4000000\_1\_SCC                                  & \multicolumn{1}{l}{}                  &                              &                             & 190.158                     \\ 
simple\_4000000\_5\_SCC                                 & 470.012                                &                              &                             & \multicolumn{1}{l}{}       \\ 
simple\_5000000\_1\_SCC                                 & \multicolumn{1}{l}{}                  &                              &                             & 301.517                     \\ 
simple\_5000000\_5\_SCC                                & 589.712                                &                              &                             & 1723.37                     \\ 
simple\_10000000\_1\_SCC                                & 581.999                                &                              &                             & 804.652                     \\ 
simple\_10000000\_5\_SCC                             & 1433.221                               &                              &                             & \multicolumn{1}{l}{}       \\ \bottomrule
\end{tabular}
	\caption{Performance comparison of PTBVI against $\TLPSI$, $\TSISI$, WP}
	\label{fig:tablePTBVI}
\end{table}
\end{appendix}

\end{document}